%% The first command in your LaTeX source must be the \documentclass command.
%%
%% Options:
%% twocolumn : Two column layout.
%% hf: enable header and footer.
\documentclass{ceurart}

%%
%% One can fix some overfulls
\sloppy

%%
%% Minted listings support 
%% Need pygment <http://pygments.org/> <http://pypi.python.org/pypi/Pygments>
\usepackage{listings}
%% auto break lines
\lstset{breaklines=true,mathescape=true,basicstyle={\ttfamily\small},keywordstyle={\ttfamily\bfseries}}
\usepackage{centernot}
\usepackage{graphicx}
\usepackage{proof} % compact proof trees for examples
\usepackage{hyperref} % links
\usepackage{verbatim} % multiline comments
\usepackage[dvipsnames]{xcolor}
\usepackage{mathtools}
\usepackage{amsmath}
\usepackage{amssymb}
\usepackage{stmaryrd}
\usepackage{color}
\usepackage{amssymb}
\usepackage{pifont}
\usepackage{amsthm}
\usepackage{pifont}
\usepackage{stmaryrd}
\usepackage{listings}
\usepackage{color}
\usepackage{pgfplots}
\usepackage{subcaption}
\usepackage{cleveref}
%
%%
%% end of the preamble, start of the body of the document source.

% editing macros
\newif\ifsubmit
\submitfalse
%\submittrue

\ifsubmit
 
 \newcommand{\DAComm}[1]{}
 
 \newcommand{\AFComm}[1]{}
\else
 
 \newcommand{\DAComm}[1]{{\scriptsize\textcolor{blue}{[\bf{Angelo: }#1}]}}
 
 \newcommand{\AFComm}[1]{{\scriptsize\textcolor{red}{[\bf{Angelo: }#1}]}}
\fi

%abbreviations
\newcommand{\wrt}{w.~r.~t.}
\newcommand{\st}{s.~t.}

%environments

\newtheorem{definition}{Definition}[section]
\newtheorem{theorem}[definition]{Theorem}
\newtheorem{lemma}[definition]{Lemma}
\newtheorem{corollary}[definition]{Corollary}

%matematiche

%\newcommand{\Triple}[3]     {\Tuple{{#1},{#2},{#3}}}
%\newcommand{\fv}[1]{\textit{fv}(#1)}
%\newcommand{\vars}[1]{\textit{vars}(#1)}

\newcommand{\rn}[1]{\textsc{({#1})}} %% rule name
\newcommand{\Rule}[4]{{\tiny{\rn{#1}}}\displaystyle\frac{#2}{#3}\ \begin{array}{l} #4 \end{array}}

%syntax
\newenvironment{grammar}{$\begin{array}[t]{llll}}{\end{array}$}
\newcommand{\production}[3]{#1&{:}{:}=&#2 & \mbox{{\small{#3}}}}

\newcommand{\term}[1]{\ensuremath{\mathtt {#1}}}
\newcommand{\nonterm}[1]{\ensuremath{\mathit {#1}}}
\newcommand{\re}{\nonterm{e}}
\newcommand{\reSet}{\nonterm{RE}}
\newcommand{\reSetExt}{\nonterm{RE+}}
\newcommand{\eps}{\term{\epsilon}}
\newcommand{\none}{\term{0}}
\newcommand{\sym}{\term{a}}
\newcommand{\asym}{\term{b}}
\newcommand{\yasym}{\term{c}}

\newcommand{\orop}{\mathbin{+}}
\newcommand{\shuffleop}{\mathbin{||}}
\newcommand{\catop}{}
\newcommand{\strcons}{\mathbin{:}}
\newcommand{\strcatop}{\mathbin{\cdot}}
\newcommand{\strshuffleop}{\shuffleop}

\newcommand{\starop}[1]{{{#1}\mathop{*}}}
\newcommand{\symAlph}{\Sigma}

%derivative
\newcommand{\der}[1]{\xrightarrow{#1}}
\newcommand{\hasEps}[1]{\nu({#1})}
\newcommand{\conj}{\mathbin{\mathsf{and}}}
\newcommand{\disj}{\mathbin{\mathsf{or}}}

\newcommand{\parDer}[1]{\xrightharpoonup{#1}}

 %% generic tree path
 %% generic path element
\newcommand{\incSym}{\Delta_{\max}}
\newcommand{\incSymSize}{\eta_{\max}}
\newcommand{\incFun}[1]{{\incSym}({#1})}
\newcommand{\incFunSize}[1]{{\incSymSize}({#1})}

\newcommand{\theight}[1]{{\mid}{#1}{\mid}} %% tree height
\newcommand{\tsizeSym}{{\mid\!\mid}}
\newcommand{\tsize}[1]{{\tsizeSym}{#1}{\tsizeSym}} %% tree height

\newcommand{\geqSym}{\mathit{geq}}
\newcommand{\geqFun}[2]{\geqSym({#1},{#2})}

%semantics
\newcommand{\word}{\nonterm{w}}
\newcommand{\emptyWord}{\lambda}
\newcommand{\sem}[1]{\left\llbracket{#1}\right\rrbracket}
\newcommand{\opSem}[2]{\left\llbracket{#1}\right\rrbracket_{#2}}

\newcommand{\lang}{\nonterm{L}}

\begin{document}
%\copyrightyear{2025}
%\copyrightclause{Copyright for this paper by its authors. Use permitted under Creative Commons License Attribution 4.0 International (CC BY 4.0).}
\conference{Extended versions of the ICTCS 2025 paper with full proves}

\title{On The Space Complexity of Partial Derivatives of Regular Expressions with Shuffle}

\author[1]{Davide Ancona}[orcid=0000-0002-6297-2011,email=davide.ancona@unige.it]

\cormark[1]

\address[1]{University of Genova, Italy}

\author[2]{Angelo Ferrando}[orcid=0000-0002-8711-4670,
 email=angelo.ferrando@unimore.it
]

\address[2]{University of Modena and Reggio Emilia, Italy}

%% Footnotes
\cortext[1]{Corresponding author.}

\begin{keywords}
 partial derivatives \sep
 regular expressions with shuffle \sep
 rewriting-based runtime verification \sep
 space complexity
\end{keywords}

\maketitle
\begin{abstract}
 Partial derivatives of regular expressions, introduced by Antimirov, define an elegant algorithm for generating equivalent non-deterministic finite automata (NFA) with a limited number of states.
 Here we focus on runtime verification (RV) of simple properties expressible with regular expressions. In this case, words are finite traces of monitorable events forming the language's alphabet, and the generated NFA may have an intractable number of states.
 This typically occurs when sub-traces of mutually independent events are allowed to interleave.
 To address this issue, regular expressions used for RV are extended with the shuffle operator to make specifications more compact and easier to read.
 Exploiting partial derivatives enables a rewriting-based approach to RV, where only one derivative is stored at each step, avoiding the construction of an intractably large automaton.
 This raises the question of the space complexity of the largest generated partial derivative. While the total number of generated partial derivatives is known to be linear in the size of the initial regular expression, no results can be found in the literature regarding the size of the largest partial derivative.
 We study this problem \wrt~ two metrics (height and size of regular expressions), and show that the former increases by at most one, while the latter is quadratic in the size of the regular expression. Surprisingly, these results also hold with shuffle.
\end{abstract}
\section{Introduction}\label{sec:intro}

Derivatives of regular expressions were introduced by Brzozowski \cite{Brzozowski64,OwensEtAl09}
to provide a very compact operational semantics for them, and to derive an algorithm for generating equivalent deterministic finite automata (DFA).
A first contribution of our work is to show that derivatives can be defined by means of a labeled transition system between regular expressions, where labels are the alphabet symbols. Given a fixed symbol, at each step there is always a unique transition, therefore the system defines a deterministic automaton where the initial state is an initial regular expression $\re$, and all other states are regular expressions that are derivatives obtained from $\re$.
Such an automaton is provably equivalent to $\re$,
if the set of its final states is defined as the set of all the corresponding regular expressions which recognize the empty word.
The main issue with this approach is that the number of syntactically different derivatives of $\re$ is generally unbounded, hence, to obtain an equivalent DFA, a finite quotient algebra of states modulo a suitable congruence has to be considered.

Inspired by Brzozowski's work, Antimirov \cite{Antimirov96} defined partial derivatives of regular expressions to define an algorithm that generates, from regular expressions, equivalent non-deterministic finite automata (NFA) with a limited number of states. The main breakthrough at the root of Antimirov's work is that the number of syntactically distinct partial derivatives of $\re$ is provably linear in the size of $\re$. Therefore, a construction analogous to that
proposed by Brzozowski can be used with syntactic equality without considering any weaker notion of congruence. However, the corresponding automaton is, in general non-deterministic, because
of the transition system defining partial derivatives: given a fixed symbol, at each step there can be zero or more transitions.

While Antimirov's work addresses the classical problem of word recognition in regular languages, here we focus on runtime verification (RV)~\cite{LeuckerSchallhart09} of simple properties expressible with regular expressions.

RV is a technique used to check the trace of events generated during a single execution of the System Under Scrutiny (SUS). This is done with monitors automatically generated from specifications that describe the correct behavior of the SUS (suitably instrumented to produce the trace to be analyzed).

RV is complementary to both formal verification and testing: like formal methods, it is based on specification languages, although it is not exhaustive; like testing, it is scalable and suitable for real-world systems and complex properties. Differently from testing, RV can detect errors even in non-deterministic systems~\cite{HavelundRosu2004,SharmaEtAl2009,TowardsIoT17,SchiavioEtAl19}, and therefore represents a valuable complement to testing approaches~\cite{BesnardEtAl23}.

When properties to be verified can be defined with regular expressions, words are finite traces of monitorable events, which define the alphabet of the language, and the monitor is the finite automaton generated from the regular expression defining the correct traces. However, this approach may be unfeasible because of the intractable number of states of the generated automaton, even for NFAs.

This typically happens when the properties to be verified consist of sub-traces of mutually independent events, which are allowed to interleave during the execution of the SUS. Consider, for instance, the challenge of verifying that files are properly opened and closed, in a system able to handle multiple files independently. In such cases, extending regular expressions with the shuffle operator \cite{RML2021}
can significantly reduce the size of specifications and enhance clarity.
Moreover, by using partial derivatives, it becomes possible to avoid generating automata with an intractable number of states, and adopt, instead, a rewriting-based approach~\cite{RosuEtAl2005,RML2021}  to RV, where only a single partial derivative needs to be considered at each reduction step.

This raises the question of the space complexity of the largest partial derivative that can be generated from an initial regular expression. Indeed, while the total number of generated partial derivatives is known to be linear in the size of the initial regular expression, no results can be found in the literature regarding the size of the largest partial derivative.

In this paper, we investigate this problem with respect to two different metrics for regular expressions: their height and their total number of nodes when viewed as trees. In particular, we show that the height of the largest partial derivative can increase by at most one, while the number of nodes in the largest partial derivative is bounded by $n^2$, where $n$ is the number of nodes of the initial regular expression. Surprisingly, these results still hold when regular expressions are extended with shuffle.

To this aim, we propose a proof methodology based on the definition of a function that, given a regular expression, returns an upper bound of the increment w.r.t. a specific metric for all its  partial derivatives. In this way, an invariant can be established on single rewriting steps and directly extended to multiple steps, to derive the expected space complexity results.  This proof methodology allows us to keep the same proof structure for all main results, and reuse parts of the proofs.

The paper is structured as follows: \Cref{sec:der} introduces the basic notion of derivatives of regular expressions. \Cref{sec:parDer} defines partial derivatives and proves the space complexity results ~\wrt~the height and the size of expressions.
\Cref{sec:shuffle} extends regular expressions with the shuffle operator, and show that the property proved in \cref{sec:parDer} still hold. Finally, \cref{sec:conclu} draws conclusions and some directions for future work.
\section{Derivatives of regular expressions}\label{sec:der}

This section provides the basic notions on derivatives, by assuming familiarity with regular expressions.

Let $\symAlph$ denote a given non-empty finite set of symbols, called \emph{alphabet}. We call \emph{word} an element of $\symAlph^*$, and denote with $\emptyWord$ the empty word. Furthermore, $\sym\strcons\word$ denotes the word where $\sym$ is the first symbol, and $\word$ the rest of the word, while $\word\strcatop\word'$ denotes word concatenation.
The set $\reSet$ of regular expressions over $\symAlph$ is inductively defined by the following grammar:
\begin{center}
 \begin{small}
  \begin{grammar}
   \production{\re}{\none \mid \eps \mid \sym \mid \re_0\catop\re_1 \mid \re_0\orop\re_1 \mid \starop{\re}}{with $\sym\in\symAlph$} \\
  \end{grammar}
 \end{small}
\end{center}
We use the standard precedence rules: the Kleene star operator $\starop{\re}$  has higher precedence than concatenation $\re_0\catop\re_1$, which in turn has higher precedence than union $\re_0\orop\re_1$.

\Cref{def:langOp} introduces the standard operators on regular languages, and \cref{def:sem} the standard semantics of regular expressions.

\begin{definition}\label[definition]{def:langOp}
 For all $\lang,\lang_0,\lang_1\subseteq\symAlph^*$
 \[
  \begin{array}{l}
   \lang_1\strcatop\lang_2=\{\word_1\strcatop\word_2\mid \word_1\in\lang_1,\word_2\in\lang_2\} \quad
   \lang^0=\{\emptyWord\}, \lang^{n+1}=\lang\strcatop\lang^n \mbox{for all $n\geq 0$}           \quad
   \starop{\lang} = \bigcup_{n\geq 0}\lang^n
  \end{array}
 \]
\end{definition}
\begin{definition}\label[definition]{def:sem}
 \[
  \begin{array}{l}
   \sem{\none}=\emptyset\quad \sem{\eps}=\{\emptyWord\} \quad \sem{\sym}=\{\sym\} \quad
   \sem{\re_0\catop\re_1}=\sem{\re_0}\strcatop\sem{\re_1} \quad \sem{\re_0\orop\re_1}=\sem{\re_0}\cup\sem{\re_1} \quad \sem{\starop{\re_0}}=\starop{\sem{\re_0}}
  \end{array}
 \]
\end{definition}

Note the difference between $\eps$ (a constant in the syntax of regular expressions) and $\emptyWord$ (the empty word).

The derivative  \cite{Brzozowski64} of a language $L$ \wrt~a word $\word\in\symAlph^*$, denoted by $\word^{-1}(\lang)$, is defined as follows:
\begin{definition}\label[definition]{def:sem-der}
 For all $\lang\subseteq\symAlph^*$, $\word\in\symAlph^*$,\quad
 $\word^{-1}(\lang) = \{ \word' \mid \word\strcatop\word'\in\lang \}$.

\end{definition}
Directly from \cref{def:sem-der} one can derive
\begin{equation}
 \label{eq:der}
 \word\in\lang \mbox{ iff }
 \emptyWord\in\word^{-1}(\lang)
\end{equation}

While the notion of derivative applies to any formal language, regular languages enjoy the following closure property: if $\lang$ is regular, then
$\word^{-1}(\lang)$ is regular as well, for any word $\word$. In other words, the derivative of a regular expression can be defined by another regular expression.

This property allows an operational and succinct definition for the derivatives of regular expressions.
First, the derivative \wrt~a single symbol is defined (see~\cref{fig:der}), then
the more general notion of derivative \wrt~a word is derived by reflexive  transitive closure (see~\cref{def:der}).

Here we deliberately follow the style of a labelled transition system, defined by means of inference rules, to highlight two interesting related aspects:\footnote{To the best of our knowledge, no previous work has used this style to define the (partial) derivatives of regular expressions.}
\begin{itemize}
 \item derivatives provide an intuitive way to define a small-step semantics for regular expressions;
 \item a reduction step $\re\der{\sym}\re'$ corresponds to a transition step from the state represented by $\re$ to a state represented by $\re'$ with symbol $\sym$, for an automaton which recognizes the language defined by $\re$.
\end{itemize}
By using the notation of Antimirov~\cite{Antimirov96}, which is an adaptation of that introduced by Brzozowski~\cite{Brzozowski64}, we have
that $\sym^{-1}(\re_0)=\re_1$ iff $\re_0\der{\sym}\re_1$, and $\word^{-1}(\re_0)=\re_1$ iff $\re_0\der{\word}\re_1$.

The definition of $\hasEps{\re}$ in~\cref{fig:der} deserves some comments:
the intended meaning is that $\hasEps{\re}=\eps$ iff
$\emptyWord\in\sem{\re}$, and, dually, $\hasEps{\re}=\none$ iff
$\emptyWord\not\in\sem{\re}$. The base cases of the definition are straightforward; thanks to the auxiliary functions $\conj$, $\disj$ defined on $\eps$ and $\none$ in the corresponding tables, we can deduce that, as expected, $\re_0\catop\re_1$ recognizes $\emptyWord$ iff both $\re_0$ and $\re_1$ recognize $\emptyWord$, while  $\re_0\orop\re_1$ recognizes $\emptyWord$ iff $\re_0$ or $\re_1$ recognizes $\emptyWord$.

The fact that $\hasEps{\re}$ does not return a Boolean value was a deliberate choice of Brzozowski to define rule \rn{cat} in a more compact way:
if $\hasEps{\re_0}=\eps$, then $\re_0\catop\re_1\der{\sym}\re'_0\orop\eps\catop\re_1'$, since in this case $\re_1$ contributes to the
derivative. Otherwise, $\re_0\catop\re_1\der{\sym}\re'_0\orop\none\catop\re_1'$, since $\re_1$ does not contribute to the
derivative.

Note that by definition of the rules, for all $\re$ and $\sym$, there always exists a unique derivative of $\re$ \wrt~$\sym$.
For instance, if $\sym\neq\asym$, $\sym\neq\yasym$, then
$\sym \asym \orop \sym\yasym\der{\sym}(\eps\catop\asym\orop\none\catop\none)\orop(\eps\catop\yasym\orop\none\catop\none)$ and $\sym \asym \orop \sym \yasym\der{\asym}(\none\catop\asym\orop\none\catop\eps)\orop(\none\catop\yasym\orop\none\catop\none)$.
\begin{figure}[t]
 \begin{center}
  $$
   \begin{array}{c}
    \Rule{empty}{}{\none\der{\sym}\none}\quad
    \Rule{eps}{}{\eps\der{\sym}\none}\quad
    \Rule{sym-eq}{}{\sym\der{\sym}\eps}\quad
    \Rule{sym-neq}{}{\sym\der{\asym}\none}{\sym\neq\asym} \\[4ex]
    \Rule{cat}{\re_0\der{\sym}\re'_0\quad\re_1\der{\sym}\re'_1}{\re_0\catop\re_1\der{\sym}\re'_0\catop\re_1\orop\hasEps{\re_0}\catop\re'_1}{} \quad
    \Rule{or}{\re_0\der{\sym}\re'_0\quad\re_1\der{\sym}\re'_1}{\re_0\orop\re_1\der{\sym}\re'_0\orop\re'_1}{}\quad
    \Rule{star}{\re\der{\sym}\re'}{\starop{\re}\der{\sym}\re'\catop\starop{\re}}{}
   \end{array}
  $$
  \begin{tabular}{cccc}
   $
    \begin{array}{l}
     \hasEps{\eps}=\hasEps{\starop{\re_0}}=\eps                  \\
     \hasEps{\none}=\hasEps{\sym}=\none                          \\
     \hasEps{\re_0\catop\re_1}=\hasEps{\re_0}\conj\hasEps{\re_1} \\
     \hasEps{\re_0\orop\re_1}=\hasEps{\re_0}\disj\hasEps{\re_1}
    \end{array}
   $
    &
   \begin{tabular}{|c|c|c|}
    \hline
    $\conj$ & $\quad\none\quad$ & $\quad\eps\quad$ \\
    \hline
    $\none$ & $\none$           & $\none$          \\
    \hline
    $\eps$  & $\none$           & $\eps$           \\
    \hline
   \end{tabular}
    & \qquad &
   \begin{tabular}{|c|c|c|}
    \hline
    $\disj$ & $\quad\none\quad$ & $\quad\eps\quad$ \\
    \hline
    $\none$ & $\none$           & $\eps$           \\
    \hline
    $\eps$  & $\eps$            & $\eps$           \\
    \hline
   \end{tabular}
  \end{tabular}
 \end{center}

 \caption{Transition system defining the derivatives of regular expressions}
 \label{fig:der}
\end{figure}

\Cref{def:der} introduces the general notion of derivative \wrt~a word $\word$, by computing the reflexive transitive closure of the one-step relation defined in~\cref{fig:der}. As expected, the definition is by induction\footnote{It is worth noting that the computation of a derivative always ``terminates'' in a finite number of steps, because words have finite length.} on the length of $\word$. The multiple-steps rewriting relation provides also the operational semantics of regular expressions \wrt~derivatives, driven by~\cref{eq:der}. In the view above, which considers the labelled transition system as a deterministic automaton,
the definition corresponds to that of accepted language, where the set of final states is the set of all derivatives $\re'$ \st~$\hasEps{\re'}=\eps$.
It is straightforward to prove that the set of all syntactically distinct derivatives of a regular expression $\re$ is generally unbounded, and so is their size. Hence, the underlying automaton is infinite, unless some finite quotient algebra of states modulo a suitable congruence is considered.
\begin{definition}\label[definition]{def:der}
 For all $\re,\re_0,\re_1\in\reSet$, $\sym\in\symAlph$, $\word\in\symAlph^*$
 \[
  \begin{array}{l}
   \re\der{\emptyWord}\re \qquad
   \re_0\der{\sym\strcons\word}\re_1 \mbox{ iff } \re_0\der{\sym}\re \mbox{ and } \re\der{\word}\re_1
   \\
   \opSem{\re}{\der{}}=\{\word\in\symAlph^* \mid \mbox{there exists } \re'\in\reSet \mbox{ s.t. } \re\der{\word}\re' \mbox{ and } \hasEps{\re'}=\eps\}
  \end{array}
 \]
\end{definition}

\Cref{theo:der} establishes the results proved by Brzozowski \cite{Brzozowski64}:
the definitions of $\hasEps{\re}$ and $\re\der{\word}\re'$ are sound
\wrt\ their intended semantics, and the operational semantics of regular expressions is equivalent to the standard one.
\begin{theorem}\label[theorem]{theo:der}
 For all $\re,\re'\in\reSet$, $\word\in\symAlph^*$
 \[
  \begin{array}{l}
   \hasEps{\re} = \eps \mbox{ iff } \emptyWord\in\sem{\re}              \qquad
   \re\der{\word}\re' \mbox{ implies } \word^{-1}(\sem{\re})=\sem{\re'} \qquad
   \opSem{\re}{\der{}}=\sem{\re}
  \end{array}
 \]
\end{theorem}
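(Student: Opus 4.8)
The plan is to prove the three statements in the order given, since each uses the previous ones. For the first, $\hasEps{\re}=\eps$ iff $\emptyWord\in\sem{\re}$, I would proceed by structural induction on $\re$. The base cases $\none,\eps,\sym$ follow immediately by unfolding \cref{def:sem} and the defining equations of $\hasEps{\cdot}$ in \cref{fig:der}. For $\re_0\catop\re_1$ I use that $\emptyWord\in\sem{\re_0}\strcatop\sem{\re_1}$ holds exactly when $\emptyWord\in\sem{\re_0}$ and $\emptyWord\in\sem{\re_1}$ (since a factorisation $\emptyWord=\word_1\strcatop\word_2$ forces $\word_1=\word_2=\emptyWord$); combining this with the induction hypothesis and the table for $\conj$ closes the case. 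The case $\re_0\orop\re_1$ is symmetric, using the table for $\disj$, and $\starop{\re_0}$ is immediate because $\emptyWord\in\starop{\sem{\re_0}}$ always and $\hasEps{\starop{\re_0}}=\eps$ by definition.

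For the second statement I would first prove the single-symbol version, $\re\der{\sym}\re'$ implies $\sym^{-1}(\sem{\re})=\sem{\re'}$, by structural induction on $\re$ (which coincides with induction on the derivation, as the transition rules of \cref{fig:der} are syntax-directed: the applicable rule is determined by the top constructor of $\re$). The four axioms are checked directly against \cref{def:sem-der}. Rule \rn{or} reduces to $\sym^{-1}(L_0\cup L_1)=\sym^{-1}(L_0)\cup\sym^{-1}(L_1)$ together with the two induction hypotheses; rule \rn{star} reduces to $\sym^{-1}(\starop{L})=\sym^{-1}(L)\strcatop\starop{L}$, which holds because $\starop{L}=\starop{(L\setminus\{\emptyWord\})}$, so every word of $\starop{L}$ beginning with $\sym$ splits into a nonempty factor in $L$ followed by a word in $\starop{L}$. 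Rule \rn{cat} is the delicate one: here I need the language identity $\sym^{-1}(L_0\strcatop L_1)=(\sym^{-1}(L_0))\strcatop L_1\,\cup\,(\{\emptyWord\}\cap L_0)\strcatop\sym^{-1}(L_1)$. The induction hypotheses give $\sym^{-1}(\sem{\re_0})=\sem{\re'_0}$ and $\sym^{-1}(\sem{\re_1})=\sem{\re'_1}$, while the first statement together with \cref{def:sem} gives $\sem{\hasEps{\re_0}}=\{\emptyWord\}\cap\sem{\re_0}$; substituting these into the identity yields exactly $\sem{\re'_0\catop\re_1\orop\hasEps{\re_0}\catop\re'_1}$, as required.

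I would then lift this to words by induction on the length of $\word$, following \cref{def:der}: the base case $\word=\emptyWord$ is $\emptyWord^{-1}(\sem{\re})=\sem{\re}$ with $\re\der{\emptyWord}\re$, and for $\word=\sym\strcons\word''$ I compose the single-symbol case applied to the first step $\re\der{\sym}\re''$ with the induction hypothesis applied to $\re''\der{\word''}\re'$, using $(\sym\strcons\word'')^{-1}(L)=(\word'')^{-1}(\sym^{-1}(L))$. Finally, the third statement is obtained by assembling the first two with \eqref{eq:der}: for any $\word$ the transition system is deterministic (as noted after \cref{fig:der}), so there is a unique $\re'$ with $\re\der{\word}\re'$; by the second statement $\sem{\re'}=\word^{-1}(\sem{\re})$, and by the first statement $\hasEps{\re'}=\eps$ iff $\emptyWord\in\sem{\re'}=\word^{-1}(\sem{\re})$, which by \eqref{eq:der} holds iff $\word\in\sem{\re}$; hence $\word\in\opSem{\re}{\der{}}$ iff $\word\in\sem{\re}$.

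The only step I expect to require genuine care is rule \rn{cat} in the single-symbol part: this is exactly where the $\hasEps{\cdot}$ gadget earns its keep, and one must align the two summands of the language identity for $\sym^{-1}(L_0\strcatop L_1)$ with the precise shape of the derivative term, in particular verifying that the $\eps$-versus-$\none$ first factor of the second summand corresponds to the set $\{\emptyWord\}\cap\sem{\re_0}$. The remaining cases are routine unfolding of definitions and elementary set-theoretic reasoning.
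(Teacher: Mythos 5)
Your proof is correct. The paper itself does not prove \cref{theo:der} --- it simply attributes the results to Brzozowski \cite{Brzozowski64} --- and your argument is a sound reconstruction of the standard one: structural induction for the $\hasEps{\cdot}$ claim, induction on the (syntax-directed) transition rules for the single-symbol case with the key language identity for $\sym^{-1}(L_0\strcatop L_1)$ handled exactly where the $\hasEps{\cdot}$ gadget is needed, lifting to words by induction on length, and assembling the third claim from the first two together with the totality and determinism of the transition relation noted after \cref{fig:der}.
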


\section{Partial derivatives of regular expressions}\label{sec:parDer}

\begin{figure}
 $$
  \begin{array}{c}
   \Rule{sym}{}{\sym\parDer{\sym}\eps}\quad
   \Rule{l-cat}{\re_0\parDer{\sym}\re'_0}{\re_0\catop\re_1\parDer{\sym}\re'_0\catop\re_1}{}\quad
   \Rule{r-cat}{\re_1\parDer{\sym}\re'_1}{\re_0\catop\re_1\parDer{\sym}\re'_1}{\hasEps{\re_0}=\eps} \\[4ex]
   \Rule{l-or}{\re_0\parDer{\sym}\re'_0}{\re_0\orop\re_1\parDer{\sym}\re'_0}{} \quad
   \Rule{r-or}{\re_1\parDer{\sym}\re'_1}{\re_0\orop\re_1\parDer{\sym}\re'_1}{} \quad
   \Rule{star}{\re\parDer{\sym}\re'}{\starop{\re}\parDer{\sym}\re'\catop\starop{\re}}{}
  \end{array}
 $$
 \caption{Transition system defining the partial derivatives of regular expressions}
 \label{fig:parDer}
\end{figure}

~\Cref{fig:parDer} contains the transition rules defining the partial derivatives of regular expressions \wrt~ a single symbol.
The definition follows the work of Antimirov \cite{Antimirov96}, but
as happens for the definition of derivatives, our presentation is based on a labelled transition system. More importantly, here we deliberately define a non-deterministic system, where a single transition step yields a single partial derivative among all possible ones, while in Antimirov's definition the whole set of partial derivatives is computed.\footnote{This is for efficiency reasons, though the obtained states are still used by Antimirov to generate an NFA.} With our approach, the labelled transition system directly induces an NFA equivalent to the initial regular expression, and the main differences between derivatives and partial derivatives are highlighted, as explained below.

Rules \rn{cat} and \rn{or} in \cref{fig:der} are split in \cref{fig:parDer} in the two possibly overlapping\footnote{In the sense that for some regular expressions and symbols, both are applicable.} rules \rn{l-cat}, \rn{r-cat} and \rn{l-or}, \rn{r-or}, respectively.
For instance, according to Antimirov's definition\footnote{See \cite{Antimirov96}, definition 2.8.},
the partial derivative of $\sym \asym \orop \sym\yasym$ \wrt~symbol $\sym $ returns the set of regular expressions $\{\eps\catop\asym,\eps\catop\yasym\}$ (that is, $\delta_\sym(\sym\asym \orop \sym\yasym)=\{\eps\catop\asym,\eps\catop\yasym\}$ with Antimirov's notation). This corresponds to the fact that there exist two possible transition steps from $\sym \asym \orop \sym\yasym$ labeled with $\sym $, namely,
$\sym \asym \orop \sym\yasym\parDer{\sym}\eps\catop\asym$ and $\sym \asym \orop \sym \yasym\parDer{\sym}\eps\catop\yasym$.

Another difference with derivatives is that in \cref{fig:parDer} there are no rules returning the empty derivative. Therefore, there are cases where the partial derivative of a regular expression \wrt\ a symbol (or word) does not exist. This is useful when partial derivatives are used for RV, because in this way errors can be detected with no further checks on the partial derivative: if no moves can be taken, then the partial derivative is empty and  the trace of events (that is, the word) considered so far cannot be the prefix of any correct trace, therefore a violation of the specification can be reported.

% While Antimirov's definition is still deterministic since sets  are considered instead of single partial derivatives, here we deliberately introduce a non-deterministic system.
% This difference is motivated by the main aim of our work: Antimirov's approach uses partial derivatives for generating an NFA, while here
% we use them to define a non-deterministic transition system which keeps the underlying NFA implicit.

Finally, a major difference between derivatives and partial derivatives
lies in the fact that the set of all syntactically distinct partial derivatives of a given regular expression $\re$ is bounded by the size of $\re$. This means that, after a single partial derivative is generated at each rewriting step, there is no need to simplify it in order to keep the size of partial derivatives bounded.

\Cref{def:parDer} is dual to \Cref{def:der}.
\begin{definition}\label{def:parDer}
 For all $\re,\re_0,\re_1\in\reSet$, $\sym\in\symAlph$, $\word\in\symAlph^*$
 \[
  \begin{array}{l}
   \\
   \re\parDer{\emptyWord}\re                                                                                       \qquad
   \re_0\parDer{\sym\strcons\word}\re_1 \mbox{ iff } \re_0\parDer{\sym}{}\re \mbox{ and } \re\parDer{\word}{}\re_1 \\
   \opSem{\re}{\parDer{}}=\{\word\in\symAlph^* \mid \mbox{there exists } \re'\in\reSet \mbox{ s.t. } \re\parDer{\word}\re' \mbox{ and } \hasEps{\re'}=\eps\}
  \end{array}
 \]
\end{definition}

The claims below are dual to \cref{theo:der}, and  directly follow from the results of Antimirov \cite{Antimirov96}.

Let $\delta_\word(\re)$ denotes the set of all partial derivatives of $\re\in\reSet$ \wrt~ $\word\in\symAlph^*$:
\[
 \delta_\word(\re)=\{\re'\in\reSet \mid \re\parDer{\word}\re'\}.
\]

% \begin{definition}
%  Let $\reSet_1$ and $\reSet_2$ be sets of regular expressions.
%  \[
%   \begin{array}{l}
%    \reSet_1\derSet{\sym}\reSet_2 \mbox{ iff } \reSet_2=\{\re_1 \mid \mbox{ there exists } \re_0\in\reSet_1 \mbox{ s.t. } \re_0\parDer{\sym}{}\re_1 \} \\
%    \hasEps{\reSet}=\eps \mbox{ iff there exists } \re\in\reSet \mbox{ s.t. } \hasEps{\re}                                                             \\
%    \re_0\derSet{\sym\word}\re_1 \mbox{ iff } \re_0\derSet{\sym}\re \mbox{ and } \re\derSet{\word}\re_1                                                \\
%    \re\derSet{\emptyWord}{}\re                                                                                                                        \\
%    \opSem{\re}{\derSet{}}=\{\word\in\wordUniv \mid \mbox{there exists } \reSet \mbox{ s.t. } \{\re\}\derSet{\word}\reSet \mbox{ and } \hasEps{\reSet}=\eps\}
%   \end{array}
%  \]
% \end{definition}

\begin{theorem}\label[theorem]{theo:parDer}
 For all $\re\in\reSet$,
 $\word^{-1}(\sem{\re})=\bigcup_{\re'\in\delta_\word(\re)}\sem{\re'}$ and
 $\opSem{\re}{\parDer{}{}}=\opSem{\re}{\der{}}$.
\end{theorem}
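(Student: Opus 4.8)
The plan is to establish the two claims of \Cref{theo:parDer} by connecting partial derivatives to ordinary derivatives (\Cref{theo:der}), rather than re-deriving everything from scratch. The key bridging lemma is the following correspondence between one-step derivatives and one-step partial derivatives: for every $\re$ and $\sym$, if $\re\der{\sym}\re''$ then $\sem{\re''}=\bigcup_{\re'\in\delta_\sym(\re)}\sem{\re'}$; equivalently, the (unique) derivative denotes the union of the languages of all partial derivatives. First I would prove this lemma by structural induction on $\re$, following exactly the case split of \Cref{fig:parDer} against \Cref{fig:der}. The base cases ($\none,\eps,\sym$) are immediate: e.g.\ for $\sym$ the derivative is $\eps$ and $\delta_\sym(\sym)=\{\eps\}$. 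For $\re_0\orop\re_1$, rule \rn{or} gives the derivative $\re_0'\orop\re_1'$ whose semantics is $\sem{\re_0'}\cup\sem{\re_1'}$, while $\delta_\sym(\re_0\orop\re_1)=\delta_\sym(\re_0)\cup\delta_\sym(\re_1)$ by \rn{l-or},\rn{r-or}, and the induction hypothesis closes the case. For $\starop{\re}$, rule \rn{star} gives $\re'\catop\starop{\re}$, and \rn{star} for partial derivatives gives $\{\re''\catop\starop{\re}\mid\re''\in\delta_\sym(\re)\}$, so we need $\sem{\re'}\strcatop\starop{\sem{\re}}=\bigl(\bigcup_{\re''\in\delta_\sym(\re)}\sem{\re''}\bigr)\strcatop\starop{\sem{\re}}$, which follows from the IH and distributivity of $\strcatop$ over $\cup$.

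The main obstacle is the concatenation case $\re_0\catop\re_1$, which requires a careful subdivision on whether $\hasEps{\re_0}=\eps$. The derivative is $\re_0'\catop\re_1\orop\hasEps{\re_0}\catop\re_1'$. If $\hasEps{\re_0}=\none$, the second summand denotes $\emptyset$, and on the partial-derivative side only \rn{l-cat} applies, giving $\{\re''\catop\re_1\mid\re''\in\delta_\sym(\re_0)\}$; the IH on $\re_0$ plus distributivity finishes it. If $\hasEps{\re_0}=\eps$ (equivalently $\emptyWord\in\sem{\re_0}$ by \Cref{theo:der}), the derivative denotes $\bigl(\sem{\re_0'}\strcatop\sem{\re_1}\bigr)\cup\sem{\re_1'}$, and both \rn{l-cat} and \rn{r-cat} apply, so $\delta_\sym(\re_0\catop\re_1)=\{\re''\catop\re_1\mid\re''\in\delta_\sym(\re_0)\}\cup\delta_\sym(\re_1)$; applying the IH to both $\re_0$ and $\re_1$ gives exactly the required union. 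Throughout, the only genuinely semantic fact used is $\word^{-1}(\lang_1\strcatop\lang_2)=\word$-appropriate case analysis, but since we piggyback on \Cref{theo:der} we only need the bridging lemma at the syntactic-semantic level and can avoid redoing Antimirov's language computation directly.

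Given the one-step lemma, the first statement $\word^{-1}(\sem{\re})=\bigcup_{\re'\in\delta_\word(\re)}\sem{\re'}$ follows by induction on the length of $\word$. For $\word=\emptyWord$, both sides equal $\sem{\re}$ since $\delta_{\emptyWord}(\re)=\{\re\}$. For $\word=\sym\strcons\word'$: by \Cref{theo:der}, $\word^{-1}(\sem{\re})=\word'^{-1}(\sym^{-1}(\sem{\re}))=\word'^{-1}(\sem{\re''})$ where $\re\der{\sym}\re''$; by the one-step lemma $\sem{\re''}=\bigcup_{\re_1\in\delta_\sym(\re)}\sem{\re_1}$, and since $\word'^{-1}$ commutes with unions this equals $\bigcup_{\re_1\in\delta_\sym(\re)}\word'^{-1}(\sem{\re_1})$; applying the induction hypothesis to each $\re_1$ gives $\bigcup_{\re_1\in\delta_\sym(\re)}\bigcup_{\re'\in\delta_{\word'}(\re_1)}\sem{\re'}$, which by \Cref{def:parDer} is precisely $\bigcup_{\re'\in\delta_\word(\re)}\sem{\re'}$.

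Finally, for $\opSem{\re}{\parDer{}}=\opSem{\re}{\der{}}$, I would argue $\opSem{\re}{\der{}}=\sem{\re}$ by \Cref{theo:der} and then show $\opSem{\re}{\parDer{}}=\sem{\re}$ directly from the first statement: $\word\in\sem{\re}$ iff $\emptyWord\in\word^{-1}(\sem{\re})$ by \cref{eq:der}, iff $\emptyWord\in\bigcup_{\re'\in\delta_\word(\re)}\sem{\re'}$, iff there is $\re'$ with $\re\parDer{\word}\re'$ and $\emptyWord\in\sem{\re'}$, iff (using the $\hasEps{}$ soundness part of \Cref{theo:der}) there is such $\re'$ with $\hasEps{\re'}=\eps$, which is the definition of $\opSem{\re}{\parDer{}}$. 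The only subtlety here is the empty-derivative phenomenon noted in the text: when $\delta_\word(\re)=\emptyset$ the union is $\emptyset$, consistently forcing $\word\notin\sem{\re}$, so no special case is needed.
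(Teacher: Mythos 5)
Your proposal is correct, but it cannot be compared line-by-line with the paper because the paper does not prove \Cref{theo:parDer} at all: it simply states that the claims ``directly follow from the results of Antimirov''. What you supply is therefore a genuine self-contained derivation where the paper defers to the literature. Your route — a one-step bridging lemma $\sem{\re''}=\bigcup_{\re'\in\delta_\sym(\re)}\sem{\re'}$ for $\re\der{\sym}\re''$, proved by structural induction with the expected case split on $\hasEps{\re_0}$ for concatenation, then lifted to words by induction on length using commutation of $\word^{-1}$ with unions and the compositionality $\delta_{\sym\strcons\word'}(\re)=\bigcup_{\re_1\in\delta_\sym(\re)}\delta_{\word'}(\re_1)$ — is essentially Antimirov's own argument recast in the paper's transition-system notation, and it correctly piggybacks on \Cref{theo:der} so that no fresh language-theoretic computation of $\word^{-1}(\lang_1\strcatop\lang_2)$ is needed; only distributivity of $\strcatop$ over $\cup$ is used. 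The final chain for $\opSem{\re}{\parDer{}}=\opSem{\re}{\der{}}$, including the observation that an empty $\delta_\word(\re)$ yields an empty union and hence no special case, is sound. Two cosmetic blemishes, neither a gap: the sentence beginning ``Throughout, the only genuinely semantic fact used is \dots'' is garbled and should be rewritten or dropped, and the identity $\word^{-1}(\lang)=\word'^{-1}(\sym^{-1}(\lang))$ for $\word=\sym\strcons\word'$ follows from \Cref{def:sem-der}, not from \Cref{theo:der}, and deserves a one-line justification of its own.
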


\begin{corollary}\label[corollary]{cor:parDer}
 For all $\re\in\reSet$,
 $\opSem{\re}{\parDer{}{}}=\sem{\re}$.
\end{corollary}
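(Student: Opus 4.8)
The plan is to obtain the statement as an immediate consequence of the two equalities already in hand. By the second equality of \Cref{theo:parDer} we have $\opSem{\re}{\parDer{}{}}=\opSem{\re}{\der{}}$, and by the last equality of \Cref{theo:der} we have $\opSem{\re}{\der{}}=\sem{\re}$; composing the two gives $\opSem{\re}{\parDer{}{}}=\sem{\re}$. So, strictly speaking, the proof is a one-line transitivity argument, and all the work has already been done in \Cref{theo:der,theo:parDer} (which in turn rest on the results of Brzozowski and Antimirov).

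If one prefers an argument that does not route through the operational semantics $\opSem{\re}{\der{}}$ of Brzozowski derivatives, I would instead unfold the definition of $\opSem{\re}{\parDer{}}$ directly. Fix an arbitrary $\word\in\symAlph^*$. By \Cref{def:parDer}, $\word\in\opSem{\re}{\parDer{}}$ holds iff there is some $\re'$ with $\re\parDer{\word}\re'$ and $\hasEps{\re'}=\eps$, i.e.\ iff there is $\re'\in\delta_\word(\re)$ with $\hasEps{\re'}=\eps$. By the first equality of \Cref{theo:der}, the condition $\hasEps{\re'}=\eps$ is equivalent to $\emptyWord\in\sem{\re'}$, so the whole statement becomes $\emptyWord\in\bigcup_{\re'\in\delta_\word(\re)}\sem{\re'}$. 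By the first equality of \Cref{theo:parDer}, this union equals $\word^{-1}(\sem{\re})$, so the condition reads $\emptyWord\in\word^{-1}(\sem{\re})$, which by \cref{eq:der} holds iff $\word\in\sem{\re}$. Since $\word$ was arbitrary, the two languages coincide.

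There is no real obstacle here: the only point deserving a moment's care is that $\delta_\word(\re)$ may be empty, but in that case both sides of each ``iff'' above are vacuously false and the chain of equivalences still goes through uniformly. I would present the short transitivity proof as the main one and, if space permits, mention the direct unfolding as a remark, since it makes explicit how \cref{eq:der} and the soundness of $\hasEps{\cdot}$ combine with the partial-derivative analogue of the language-derivative identity.
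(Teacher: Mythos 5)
Your main argument is exactly the paper's intended proof: the corollary follows by composing the second equality of \Cref{theo:parDer} with the last equality of \Cref{theo:der}. The additional direct unfolding via \cref{eq:der} is a correct alternative but not needed; the one-line transitivity argument suffices and matches the paper.
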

% \begin{theorem}
%  For all regular expressions $\re$
%  \[
%   \opSem{\re}{\derSet{}}=\opSem{\re}{\der{}}
%  \]
% \end{theorem}

The following theorem is based as well on the results proved by Antimirov \cite{Antimirov96}.

\begin{theorem}\label[theorem]{theo:boundParDer}
 For all $\re\in\reSet$, $\delta_\word(\re)$ is bounded.
\end{theorem}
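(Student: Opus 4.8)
The plan is to prove the stronger statement that the set $\bigcup_{\word \in \symAlph^*} \delta_\word(\re)$ of \emph{all} partial derivatives reachable from $\re$ (over all words, not a fixed one) is finite, from which the boundedness of each individual $\delta_\word(\re)$ follows immediately. Following Antimirov, I would define the \emph{support} (or set of \emph{partial derivative terms}) $\mathit{pd}(\re)$ by structural recursion on $\re$, so that $\mathit{pd}(\re)$ contains $\re$ itself together with all partial derivatives obtainable by any number of single-symbol steps. The key is to exhibit a recursive definition whose cardinality admits a linear bound.

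First I would set up the auxiliary notion: let $\partial(\re) = \bigcup_{\sym \in \symAlph}\delta_\sym(\re)$ be the one-step partial derivatives, and let $\mathit{PD}(\re)$ be the least set containing $\re$ and closed under $\partial$. I would then prove, by induction on the structure of $\re$, a ``syntactic shape'' lemma describing $\mathit{PD}(\re)$ in terms of the components' $\mathit{PD}$-sets. Concretely: $\mathit{PD}(\none)=\{\none\}$, $\mathit{PD}(\eps)=\{\eps\}$ wait — more carefully, $\mathit{PD}(\sym) \subseteq \{\sym, \eps\}$; for concatenation, inspecting rules \rn{l-cat} and \rn{r-cat}, every nonempty partial derivative of $\re_0\catop\re_1$ is either of the form $\re_0'\catop\re_1$ with $\re_0' \in \mathit{PD}(\re_0)\setminus\{\re_0\}$ reachable set, or lies in $\mathit{PD}(\re_1)$; for union, $\mathit{PD}(\re_0\orop\re_1) \subseteq \{\re_0\orop\re_1\} \cup \mathit{PD}(\re_0) \cup \mathit{PD}(\re_1)$; and for star, inspecting rule \rn{star}, every partial derivative of $\starop{\re}$ is of the form $\re'\catop\starop{\re}$ with $\re'$ a partial derivative of $\re$, so $\mathit{PD}(\starop\re) \subseteq \{\starop\re\} \cup \{\,\re'\catop\starop\re \mid \re' \in \mathit{PD}(\re)\,\}$. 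The delicate cases are concatenation and star, where one must check that iterating the rules does not escape the claimed shape — e.g.\ for $\starop\re$, a second step applied to $\re'\catop\starop\re$ via \rn{l-cat} descends into $\re'$ (staying within $\mathit{PD}(\re)$ appended with $\starop\re$), while via \rn{r-cat} (when $\hasEps{\re'}=\eps$) it produces a partial derivative of $\starop\re$ again, so the shape is preserved; this closure check is the crux of the argument.

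From the shape lemma, a cardinality count gives $|\mathit{PD}(\re)| \le \|\re\| + 1$ (or some linear bound in the chosen size metric), by induction: the concatenation and union cases add at most the sizes of the parts plus one for $\re$ itself, and the star case contributes $|\mathit{PD}(\re)| + 1$. Since $\delta_\word(\re) \subseteq \mathit{PD}(\re)$ for every $\word$ — which follows by induction on the length of $\word$ using \Cref{def:parDer} and the fact that $\mathit{PD}$ is closed under $\partial$ — each $\delta_\word(\re)$ is finite, indeed uniformly bounded by $\|\re\|+1$. The main obstacle, as noted, is the structural induction establishing that $\mathit{PD}$ has the stated closed form: one has to be careful that the side conditions on \rn{r-cat} and the nesting introduced by \rn{star} and \rn{l-cat} interact correctly, so that no new syntactic shapes appear after arbitrarily many steps. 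I would handle this by making the shape lemma's statement precise enough to serve directly as the induction hypothesis (stating not just an inclusion but that every element of $\mathit{PD}(\re)$ has one of finitely many explicit forms), so that the closure-under-$\partial$ verification is a finite rule-by-rule case analysis. Note, however, that since this result is attributed to Antimirov, the paper may simply cite \cite{Antimirov96} rather than reproduce this argument.
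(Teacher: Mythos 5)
Your proposal is correct and is essentially the argument the paper relies on: the paper gives no proof of \cref{theo:boundParDer} at all, stating only that it is based on the results of Antimirov \cite{Antimirov96}, and your reconstruction --- the reachable set $\mathit{PD}(\re)$, the structural shape lemma (with the closure check for the \rn{l-cat}/\rn{r-cat}/\rn{star} interaction as the crux), and the resulting linear cardinality bound --- is precisely Antimirov's argument for that cited result. One minor caveat: with the node-count metric $\tsize{\re}$ the union case of your counting gives $\tsize{\re}+2$ rather than $\tsize{\re}+1$ (Antimirov's exact bound counts alphabet-symbol occurrences instead), but as you yourself hedge, any linear bound suffices for the boundedness claim.
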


\subsection{Height of partial derivatives}\label{sec:height}

In this section we investigate the upper bound on the height of the partial derivatives of a given regular expression, defined in a standard way as follows:
$$
 \begin{array}{c}
  \theight{\eps}=\theight{\sym}=0                                                              \qquad
  \theight{\re_0\catop\re_1}=\theight{\re_0\orop\re_1}=\max(\theight{\re_0},\theight{\re_1})+1 \qquad
  \theight{\starop{\re}}=\theight{\re}+1
 \end{array}
$$
Since the height of the partial derivative of an expression $\re$  \wrt~ a word $\word$ both depends on the shape of $\re$ and on $\word$, we need to reason on any possible $\re$ and $\word$. Moreover, we aim at providing a general proof methodology which can be followed to prove results when considering also shuffle and the size of expressions.

To show a couple of examples let us consider the following reduction steps which compute the partial derivatives of $\starop{\sym}\catop\starop{\asym}$~
\wrt~ $\sym\asym$ and $\asym\asym$:
\begin{flushleft}
 $
  \begin{array}{l}
   \starop{\sym}\catop\starop{\asym}\parDer{\sym}(\eps\catop\starop{\sym})\catop\starop{\asym}\parDer{\asym}\eps\catop\starop{\asym} \qquad
   \starop{\sym}\catop\starop{\asym}\parDer{\asym}\eps\catop\starop{\asym}\parDer{\asym}\eps\catop\starop{\asym} \\
   \theight{\starop{\sym}\catop\starop{\asym}}=2\qquad \theight{(\eps\catop\starop{\sym})\catop\starop{\asym}}=3 \qquad \theight{\eps\catop\starop{\asym}}=2
  \end{array}
 $
\end{flushleft}
In the first example the height increases by one after the first step and decreases by one after the second, while in the second example the height of the expressions is unchanged.

What we are going to prove are the following main properties:
\begin{enumerate}
 \item the height of the partial derivative of a regular expression \wrt~ a symbol can increase \textbf{at most by one};
 \item the height of the partial derivative of a partial derivative \wrt~a non-empty word, \textbf{cannot} increase;
 \item from 1. and 2. and from the definition of partial derivative, one can deduce that if $\re\parDer{\word}\re'$, then $\theight{\re'}\leq\theight{\re}+1$.
\end{enumerate}
Although these properties can be proved in a direct way, they must be adapted when the size of expressions is considered.
Furthermore, they do not hold in this stronger form, when shuffle is added. Therefore we provide a more general proof methodology which works for both metrics on expressions, and can be adapted to the case of shuffle, where weaker claims hold.

In order to do that, we define the function $\incFun{\re}$ which returns an upper bound for $\theight{\re'}-\theight{\re}$, where
$\re\parDer{\word}\re'$, with $\re'\in\reSet$, $\word\in\symAlph^*$.
That is, if $M_\re\stackrel{\mathrm{def.}}{=}\max\{\theight{\re'}-\theight{\re} \mid \re\parDer{\word}\re', \re'\in\reSet, \word\in\symAlph^*\}$, then
a sound definition of $\incFun{\re}$ has to satisfy the constraint $\incFun{\re}\geq M_\re$. Hence, $\theight{\re}+\incFun{\re}$ provides an upper bound for the height of all partial derivatives of $\re$: $\theight{\re}+\incFun{\re}\geq\max\{\theight{\re'} \mid \re\parDer{\word}\re', \re'\in\reSet, \word\in\symAlph^*\}$.
\begin{definition}\label[definition]{fig:incFun}
 The function $\incFun{\re}$ is recursively defined as follows:
 $$
  \begin{array}{ll}
   \incFun{\sym}=\incFun{\eps}=0                                      \qquad
                              & \incFun{\re_0\catop\re_1}=\geqFun{\re_0}{\re_1}\cdot\incFun{\re_0} \\
   \incFun{\re_0\orop\re_1}=0 & \incFun{\starop{\re}}=1
  \end{array}
 $$
 where
 $
  \geqFun{\re_0}{\re_1}     =
  \begin{cases}
   1, & \text{if } \theight{\re_0} \geq \theight{\re_1} \\
   0, & \text{if } \theight{\re_0} < \theight{\re_1}
  \end{cases}
 $
\end{definition}
Before commenting the definition of $\incFun{\re}$, we introduce the following straightforward lemma.
\begin{lemma}\label[lemma]{lemma:bounds}
 For all $\re\in\reSet$,
 $0\leq\incFun{\re}\leq 1$.
\end{lemma}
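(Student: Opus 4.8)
The plan is to prove \Cref{lemma:bounds} by structural induction on the regular expression $\re$, following exactly the recursive structure of \Cref{fig:incFun}. The statement to establish is the double inequality $0 \leq \incFun{\re} \leq 1$, and since $\incFun{\re}$ is defined by cases on the shape of $\re$, each case is essentially immediate once the inductive hypothesis is available where needed.

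First I would handle the base cases: for $\re = \sym$ (with $\sym \in \symAlph$) and $\re = \eps$, we have $\incFun{\re} = 0$ by definition, so $0 \leq 0 \leq 1$ holds trivially. Next, for $\re = \re_0 \orop \re_1$, the definition gives $\incFun{\re} = 0$, again trivially in range, with no need to invoke the inductive hypothesis. For $\re = \starop{\re_0}$, the definition gives $\incFun{\re} = 1$, so $0 \leq 1 \leq 1$ holds. The only case requiring a little argument is concatenation, $\re = \re_0 \catop \re_1$, where $\incFun{\re} = \geqFun{\re_0}{\re_1} \cdot \incFun{\re_0}$. Here I would first observe that $\geqFun{\re_0}{\re_1} \in \{0,1\}$ directly from its definition by cases, and then apply the inductive hypothesis to $\re_0$ to get $0 \leq \incFun{\re_0} \leq 1$. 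A product of two numbers each in $[0,1]$ (indeed each in $\{0,1\}$, but $\{0,1\}\subseteq[0,1]$ suffices) lies in $[0,1]$, so $0 \leq \incFun{\re_0\catop\re_1} \leq 1$, completing the induction.

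There is essentially no main obstacle here — this is the "straightforward lemma" the authors flag, and the entire content is that every clause of \Cref{fig:incFun} returns either $0$, $1$, or a product of a $\{0,1\}$-valued factor with a recursively-bounded quantity. The only point worth stating carefully is that $\geqFun{\cdot}{\cdot}$ is manifestly $\{0,1\}$-valued, which is what keeps the concatenation case from blowing up; everything else is a one-line check. If anything, the write-up should simply enumerate the cases of $\incFun{\re}$ and note the bound in each, invoking the inductive hypothesis only in the concatenation case.
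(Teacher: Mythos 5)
Your proof is correct and follows the same route as the paper, which simply states that the lemma holds ``directly by induction on the definition of $\incFun{\re}$''; you have merely spelled out the case analysis (trivial clauses returning $0$ or $1$, plus the concatenation clause where the inductive hypothesis and the $\{0,1\}$-valuedness of $\geqSym$ are combined). No gaps; your write-up is a legitimate expansion of the paper's one-line argument.
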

\begin{proof}
 Directly by induction on the definition of $\incFun{\re}$.
\end{proof}

Because $\re\parDer{\emptyWord}\re$, $0\in \{\theight{\re'}-\theight{\re} \mid \re\parDer{\word}\re', \re'\in\reSet, \word\in\symAlph^*\}$ and the constraint $0\leq\incFun{\re}$ must always be met. The constraint $\incFun{\re}\leq 1$ corresponds to what we want to prove in item 3 above.

The definition of $\incFun{\re}$ is driven by the reduction rules in \cref{fig:parDer} and by the property in item 3. We comment only the non-trivial cases: if $\re=\re_0\orop\re_1$, then we expect $\theight{\re'_i}\leq\theight{\re_i}+1$, $i=0,1$,
hence $\theight{\re'_i}\leq\theight{\re}$ and $\incFun{\re}=0$;
if $\re=\starop{\re_0}$, then we expect $\theight{\re'_0}\leq\theight{\re_0}+1$,
hence $\theight{\re'_0\catop\starop{\re_0}}\leq\theight{\re}+1$ and $\incFun{\re}=1$. Finally, the definition for $\re=\re_0\catop\re_1$ requires more care:
if rule $\rn{l-cat}$ is applied, then $\theight{\re_0'\catop\re_1}=\theight{\re}+1$, but only when $\theight{\re_0'}=\theight{\re_0}+1$ and $\theight{\re_0}\geq\theight{\re_1}$, otherwise $\theight{\re_0'\catop\re_1}\leq\theight{\re}$; if rule $\rn{r-cat}$ is applied, then $\theight{\re'_1}\leq\theight{\re}$, as happens for rule $\rn{r-or}$. Therefore $\incFun{\re}=1$ iff $\incFun{\re_0}=1$ and $\theight{\re_0}\geq\theight{\re_1}$, that is, $\geqFun{\re_0}{\re_1}=1$. In all other cases, that is $\incFun{\re_0}=0$ or $\geqFun{\re_0}{\re_1}=0$, $\incFun{\re}=0$.

The following theorem proves the soundness of the definition of $\incSym$ \wrt~ its intended meaning. As a consequence, the height of the derivative of an expression $\re$ \wrt~ a symbol is always bounded by $\theight{\re}+1$.
\begin{theorem}\label[theorem]{theo:inc-bound}
 For all $\re,\re'\in\reSet$, $\sym\in\symAlph$, if $\re\parDer{\sym}\re'$, then $\theight{\re'}\leq \theight{\re}+\incFun{\re}$.
\end{theorem}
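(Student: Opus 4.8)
The plan is to proceed by structural induction on $\re$, following exactly the case analysis that motivates the definition of $\incFun{\re}$. For each form of $\re$, I would enumerate the transition rules in \cref{fig:parDer} that can fire to produce some $\re'$ with $\re\parDer{\sym}\re'$, compute or bound $\theight{\re'}$, and check that the bound never exceeds $\theight{\re}+\incFun{\re}$.

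First, the base cases. If $\re$ is $\none$ or $\eps$, no rule applies, so the implication holds vacuously. If $\re=\sym$, the only applicable rule is \rn{sym}, giving $\re'=\eps$ with $\theight{\eps}=0=\theight{\sym}=\theight{\sym}+\incFun{\sym}$. If $\re=\asym\neq\sym$ as the label, again no rule fires, so it is vacuous.

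Next the inductive cases. For $\re=\re_0\orop\re_1$: the derivation ends in \rn{l-or} or \rn{r-or}, so $\re'=\re_i'$ with $\re_i\parDer{\sym}\re_i'$ for $i\in\{0,1\}$; by the induction hypothesis $\theight{\re_i'}\leq\theight{\re_i}+\incFun{\re_i}\leq\theight{\re_i}+1\leq\theight{\re}$ using \cref{lemma:bounds} and $\theight{\re}=\max(\theight{\re_0},\theight{\re_1})+1$; since $\incFun{\re}=0$ this is exactly the claim. For $\re=\starop{\re_0}$: the rule is \rn{star}, so $\re'=\re_0'\catop\starop{\re_0}$ with $\re_0\parDer{\sym}\re_0'$; by the induction hypothesis $\theight{\re_0'}\leq\theight{\re_0}+\incFun{\re_0}\leq\theight{\re_0}+1=\theight{\re}$, hence $\theight{\re'}=\max(\theight{\re_0'},\theight{\starop{\re_0}})+1\leq\max(\theight{\re},\theight{\re})+1=\theight{\re}+1=\theight{\re}+\incFun{\starop{\re_0}}$.

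The delicate case, which I expect to be the main obstacle, is $\re=\re_0\catop\re_1$, because two rules (\rn{l-cat} and \rn{r-cat}) may apply and the bound $\incFun{\re}=\geqFun{\re_0}{\re_1}\cdot\incFun{\re_0}$ must be tight enough in every subcase. If \rn{r-cat} fires, then $\re'=\re_1'$ with $\re_1\parDer{\sym}\re_1'$ and $\hasEps{\re_0}=\eps$; the induction hypothesis plus \cref{lemma:bounds} give $\theight{\re_1'}\leq\theight{\re_1}+1\leq\theight{\re}\leq\theight{\re}+\incFun{\re}$. If \rn{l-cat} fires, then $\re'=\re_0'\catop\re_1$ with $\re_0\parDer{\sym}\re_0'$, so $\theight{\re'}=\max(\theight{\re_0'},\theight{\re_1})+1$; by the induction hypothesis $\theight{\re_0'}\leq\theight{\re_0}+\incFun{\re_0}$, and I split on whether $\theight{\re_0}\geq\theight{\re_1}$. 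When $\theight{\re_0}<\theight{\re_1}$, then $\theight{\re_0'}\leq\theight{\re_0}+1\leq\theight{\re_1}$, so $\theight{\re'}=\theight{\re_1}+1=\theight{\re}=\theight{\re}+\geqFun{\re_0}{\re_1}\cdot\incFun{\re_0}$ since $\geqFun{\re_0}{\re_1}=0$. When $\theight{\re_0}\geq\theight{\re_1}$, then $\theight{\re'}=\max(\theight{\re_0'},\theight{\re_1})+1\leq\max(\theight{\re_0}+\incFun{\re_0},\theight{\re_0})+1=\theight{\re_0}+\incFun{\re_0}+1=\theight{\re}+\incFun{\re_0}=\theight{\re}+\geqFun{\re_0}{\re_1}\cdot\incFun{\re_0}$, using $\theight{\re}=\theight{\re_0}+1$ and $\geqFun{\re_0}{\re_1}=1$. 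In all subcases the bound holds, which completes the induction.
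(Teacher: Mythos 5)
Your proof is correct and follows essentially the same route as the paper's: induction with case analysis on the last rule applied, the same split on whether $\theight{\re_0}\geq\theight{\re_1}$ in the \rn{l-cat} case, and the same appeals to \cref{lemma:bounds} for the \rn{r-cat}, \rn{l-or}/\rn{r-or}, and \rn{star} cases. The only (harmless) difference is that you phrase it as structural induction on $\re$ and explicitly note the vacuous cases, whereas the paper inducts on the derivation of $\re\parDer{\sym}\re'$.
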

\begin{proof}
 By induction and case analysis on the rules defining $\re\parDer{\sym}\re'$.
 \begin{description}
  \item[base case:] The only base rule is \rn{sym}.
   Hence, $\re=\sym$, $\re'=\eps$, and
   $\theight{\eps}=0=\theight{\sym}=\theight{\sym}+0=\theight{\sym}+\incFun{\sym}$.
  \item[inductive step:] the proof proceeds by case analysis on the rule applied
   at the root of the derivation tree.
   \begin{itemize}
    \item $\Rule{l-cat}{\re_0\parDer{\sym}\re'_0}{\re_0\catop\re_1\parDer{\sym}\re'_0\catop\re_1}{}$\\[2ex]
          By inductive hypothesis $\theight{\re_0'}\leq \theight{\re_0}+\incFun{\re_0}$.
          We distinguish two cases:
          \begin{itemize}
           \item $\theight{\re_0}\geq\theight{\re_1}$:
                 $\theight{\re_0'\catop\re_1}\stackrel{\mathrm{def}}{=}\max(\theight{\re'_0},\theight{\re_1})+1\leq\max(\theight{\re_0}+\incFun{\re_0},\theight{\re_1})+1\leq\max(\theight{\re_0},\theight{\re_1})+1+\incFun{\re_0}=\theight{\re_0\catop\re_1}+\incFun{\re_0}=\theight{\re_0\catop\re_1}+\incFun{\re_0\catop\re_1}$, by the inductive hypothesis, the definition of $\max$, $\theight{\ }$, $\incSym$, and $\geqSym$, the assumption $\theight{\re_0}\geq\theight{\re_1}$, and \cref{lemma:bounds}.

           \item $\theight{\re_0}<\theight{\re_1}$:
                 $\theight{\re_0'\catop\re_1}\stackrel{\mathrm{def}}{=}\max(\theight{\re'_0},\theight{\re_1})+1\leq\max(\theight{\re_0}+\incFun{\re_0},\theight{\re_1})+1=\theight{\re_0\catop\re_1}=\theight{\re_0\catop\re_1}+\incFun{\re_0\catop\re_1}$, by the inductive hypothesis, the definition of $\max$, $\theight{\ }$, $\incSym$, and $\geqSym$, the assumption $\theight{\re_0}<\theight{\re_1}$, and \cref{lemma:bounds}.
          \end{itemize}

    \item $\Rule{r-cat}{\re_1\parDer{\sym}\re'_1}{\re_0\catop\re_1\parDer{\sym}\re'_1}{\hasEps{\re_0}=\eps}$\\[2ex]
          By inductive hypothesis $\theight{\re_1'}\leq \theight{\re_1}+\incFun{\re_1}$.

          Therefore, $\theight{\re_1'}\leq \theight{\re_1}+\incFun{\re_1}\leq\theight{\re_1}+1\leq\theight{\re_0\catop\re_1}\leq\theight{\re_0\catop\re_1}+\incFun{\re_0\catop\re_1}$, by the inductive hypothesis, the definition of $\max$, $\theight{\ },$ and $\incSym$, and \cref{lemma:bounds}.

    \item $\Rule{l-or}{\re_0\parDer{\sym}\re'_0}{\re_0\orop\re_1\parDer{\sym}\re'_0}{}$\\[2ex]
          By inductive hypothesis $\theight{\re_0'}\leq \theight{\re_0}+\incFun{\re_0}$.

          Therefore, $\theight{\re_0'}\leq \theight{\re_0}+\incFun{\re_0}\leq\theight{\re_0}+1\leq\theight{\re_0\orop\re_1}=\theight{\re_0\orop\re_1}+\incFun{\re_0\orop\re_1}$, by the inductive hypothesis, the definition of $\max$, $\theight{\ },$ and $\incSym$, and \cref{lemma:bounds}.
    \item rule \rn{r-or} is symmetric to  rule \rn{l-or}.
    \item $\Rule{star}{\re\parDer{\sym}\re'}{\starop{\re}\parDer{\sym}\re'\catop\starop{\re}}{}$\\[2ex]
          By inductive hypothesis $\theight{\re'}\leq \theight{\re}+\incFun{\re}$.

          Therefore, $\theight{\re'\catop\starop{\re}}\stackrel{\mathrm{def}}{=}\max(\theight{\re'},{\theight{\starop{\re}}})+1\leq \max(\theight{\re}+\incFun{\re},{\theight{\re}}+1)+1=\theight{\re}+1+1=\theight{\starop{\re}}+1=\theight{\starop{\re}}+\incFun{\starop{\re}}$, by the inductive hypothesis, the definition of $\max$, $\theight{\ },$ and $\incSym$, and \cref{lemma:bounds}.
   \end{itemize}
 \end{description}
\end{proof}

The following corollary can be directly derived from \cref{theo:inc-bound} and \cref{lemma:bounds}.
\begin{corollary}\label[corollary]{cor:bound}
 For all $\re,\re'\in\reSet$, and $\sym\in\symAlph$, if $\re\parDer{\sym}\re'$, then $\theight{\re'}\leq\theight{\re}+1$.
\end{corollary}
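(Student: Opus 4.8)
The plan is to obtain the bound as an immediate composition of the two preceding results, since the real work has already been done in establishing \cref{theo:inc-bound}. Concretely, suppose $\re\parDer{\sym}\re'$. By \cref{theo:inc-bound} we have $\theight{\re'}\leq\theight{\re}+\incFun{\re}$, so it suffices to discard the precise value of $\incFun{\re}$ and replace it by its worst-case upper bound.

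The second step is to invoke \cref{lemma:bounds}, which gives $\incFun{\re}\leq 1$ for every $\re\in\reSet$. Chaining the two inequalities yields $\theight{\re'}\leq\theight{\re}+\incFun{\re}\leq\theight{\re}+1$, which is exactly the claim. No case analysis on the derivation of $\re\parDer{\sym}\re'$ is needed at this point: all the case distinctions (in particular the subtle one for $\rn{l-cat}$, where the increment is governed by whether $\theight{\re_0}\geq\theight{\re_1}$) have been absorbed into the definition of $\incSym$ and verified once and for all in \cref{theo:inc-bound}.

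There is essentially no obstacle here; the corollary is a pure weakening of \cref{theo:inc-bound}. If anything, the only thing worth stating explicitly is \emph{why} the chain is legitimate: \cref{theo:inc-bound} is quantified over all $\re,\re'\in\reSet$ and $\sym\in\symAlph$ with $\re\parDer{\sym}\re'$, and \cref{lemma:bounds} over all $\re\in\reSet$, so the same $\re$ can be plugged into both, and the transitivity of $\leq$ on the integers closes the argument. This also explains the design choice behind $\incSym$: the refined bound $\theight{\re}+\incFun{\re}$ is what makes the inductive proof of \cref{theo:inc-bound} go through in the concatenation case, while the coarse bound $\theight{\re}+1$ is what one ultimately wants to expose, and the two are reconciled precisely by \cref{lemma:bounds}.
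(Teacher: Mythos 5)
Your proof is correct and is exactly the paper's intended argument: the corollary is obtained by chaining \cref{theo:inc-bound} with the bound $\incFun{\re}\leq 1$ from \cref{lemma:bounds}. Nothing is missing.
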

~\Cref{theo:inc-bound} shows that the height of the derivative of an expression $\re$ \wrt~ a symbol is bounded by $\theight{\re}+1$.
The following results prove a strong property: the height of the partial derivative of a partial derivative cannot increase; that is, after the first reduction step, the height of the derivative of an expression $\re$ \wrt~ a symbol is bounded by $\theight{\re}$.
Therefore, one can conclude that the height of the derivative of an expression $\re$ \wrt~ an arbitrary  word (not just a symbol) is bounded by $\theight{\re}+1$.
\begin{lemma}\label[lemma]{lemma:zero-inc}
 For all $\re,\re'\in\reSet$, and $\sym\in\symAlph$, if $\re\parDer{\sym}\re'$, then $\incFun{\re'}=0$.
\end{lemma}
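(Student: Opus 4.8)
The plan is to prove the lemma by induction on the derivation of $\re\parDer{\sym}\re'$, performing a case analysis on the rule applied at the root. The guiding observation is that, by \cref{fig:incFun}, $\incFun{\re''}$ can be non-zero only when $\re''$ is a star, or a concatenation $\re''_0\catop\re''_1$ whose \emph{left} component satisfies both $\incFun{\re''_0}=1$ and $\theight{\re''_0}\geq\theight{\re''_1}$; in particular, as soon as the left component of a concatenation has $\incSym$-value $0$, so does the whole concatenation, since $\incFun{\re''_0\catop\re''_1}=\geqFun{\re''_0}{\re''_1}\cdot\incFun{\re''_0}$. Every inductive case produces a right-hand side $\re'$ that is either the partial derivative obtained from a strictly smaller derivation, or a concatenation whose left component is such a partial derivative, so the inductive hypothesis suffices in both situations.

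For the base case the only rule is \rn{sym}, giving $\re=\sym$, $\re'=\eps$, and $\incFun{\eps}=0$ by definition. For the inductive step: rules \rn{r-cat}, \rn{l-or}, and \rn{r-or} set $\re'$ to be exactly the expression $\re'_i$ from the premise $\re_i\parDer{\sym}\re'_i$, so the inductive hypothesis gives $\incFun{\re'}=0$ directly. For rule \rn{l-cat} we have $\re'=\re'_0\catop\re_1$ with premise $\re_0\parDer{\sym}\re'_0$; the inductive hypothesis yields $\incFun{\re'_0}=0$, whence $\incFun{\re'_0\catop\re_1}=\geqFun{\re'_0}{\re_1}\cdot\incFun{\re'_0}=0$, irrespective of $\geqSym$. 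For rule \rn{star} we have $\re'=\re''\catop\starop{\re_0}$ with premise $\re_0\parDer{\sym}\re''$; again $\incFun{\re''}=0$ by the inductive hypothesis, so $\incFun{\re''\catop\starop{\re_0}}=\geqFun{\re''}{\starop{\re_0}}\cdot\incFun{\re''}=0$. This exhausts all rules of \cref{fig:parDer}.

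I do not anticipate any real difficulty: the argument is essentially bookkeeping over the six rules. The only point worth care is to notice that the $\incSym$-value of a concatenation is controlled entirely by its left child, and that the two ``growing'' rules \rn{l-cat} and \rn{star} both expose precisely that left child as a (smaller) partial derivative, so the inductive hypothesis applies to it; note also that no rule ever returns a bare star, which is why the value $1$ never reappears once a reduction step has been taken.
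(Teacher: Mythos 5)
Your proof is correct and follows essentially the same route as the paper: induction on the derivation with a case analysis on the last rule applied, observing that the $\incSym$-value of every right-hand side is either inherited directly from the premise or is a concatenation whose value is $\geqSym\cdot\incFun{\cdot}$ applied to the derivative from the premise, which the inductive hypothesis makes $0$. Your treatment of \rn{l-cat} and \rn{star} is in fact marginally cleaner than the paper's, since you note that $\incFun{\re'_0\catop\re_1}=\geqFun{\re'_0}{\re_1}\cdot\incFun{\re'_0}$ vanishes irrespective of the value of $\geqSym$, whereas the paper splits into two cases on it.
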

\begin{proof}
 By induction and case analysis on the rules defining $\re\parDer{\sym}\re'$.
 \begin{description}
  \item[base case:] The only base rule is \rn{sym}.
   Hence, $\re=\sym$, $\re'=\eps$, and
   $\incFun{\eps}=0$.
  \item[inductive step:] the proof proceeds by case analysis on the rule applied
   at the root of the derivation tree.
   \begin{itemize}
    \item $\Rule{l-cat}{\re_0\parDer{\sym}\re'_0}{\re_0\catop\re_1\parDer{\sym}\re'_0\catop\re_1}{}$\\[2ex]
          If $\geqFun{\re_0}{\re_1}=1$, then by definition of $\incSym$ and inductive hypothesis, $\incFun{\re'_0\catop\re_1}=\incFun{\re'_0}=0$.
          If $\geqFun{\re_0}{\re_1}=0$, then by definition of $\incSym$ $\incFun{\re'_0\catop\re_1}=0$.

    \item $\Rule{r-cat}{\re_1\parDer{\sym}\re'_1}{\re_0\catop\re_1\parDer{\sym}\re'_1}{\hasEps{\re_0}=\eps}$\\[2ex]
          $\incFun{\re'_1}=0$ directly follows from the inductive hypothesis.

    \item the proof for rules \rn{l-or} and \rn{r-or} is the same as for rule \rn{r-cat}.

    \item the proof for rule \rn{star} is the same as for rule \rn{l-cat}.
   \end{itemize}
 \end{description}
\end{proof}

Given \cref{lemma:zero-inc}, the following corollary may look useless,
but it provides a general form of invariant which can be proved also for all other considered cases, where the strong claim of \cref{lemma:zero-inc} no longer holds.

\begin{corollary}\label[corollary]{cor:inv}
 For all $\re,\re'\in\reSet$, $\sym\in\symAlph$, if $\re\parDer{\sym}\re'$, then $\theight{\re'}\leq \theight{\re}+\incFun{\re}-\incFun{\re'}$.
\end{corollary}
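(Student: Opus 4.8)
The plan is to obtain the statement as an immediate consequence of the two results just established, with no further induction. Concretely, I would fix $\re,\re'\in\reSet$ and $\sym\in\symAlph$ with $\re\parDer{\sym}\re'$, and then invoke \cref{theo:inc-bound} on this very reduction step to get $\theight{\re'}\leq\theight{\re}+\incFun{\re}$, and \cref{lemma:zero-inc} on the same step to get $\incFun{\re'}=0$. Substituting the second into the first yields
$$\theight{\re'}\leq\theight{\re}+\incFun{\re}=\theight{\re}+\incFun{\re}-\incFun{\re'},$$
which is exactly the claim. The only point requiring a moment's attention is that \cref{theo:inc-bound} and \cref{lemma:zero-inc} share the single hypothesis $\re\parDer{\sym}\re'$, so they can be instantiated simultaneously on the same step; everything else is a one-line arithmetic rewrite.

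Accordingly, I expect essentially no obstacle here: the structural induction and case analysis on the rules of \cref{fig:parDer} have already been carried out in the proofs of \cref{theo:inc-bound} and \cref{lemma:zero-inc}, so nothing new must be done. The reason for stating the corollary in this slightly redundant-looking ``$-\incFun{\re'}$'' form is methodological rather than mathematical: it is the shape of invariant that telescopes cleanly along a multi-step reduction $\re\parDer{\word}\re'$ (summing it over the steps of the derivation, together with $0\le\incFun{\cdot}\le 1$ from \cref{lemma:bounds}, will give the $\theight{\re}+1$ bound for arbitrary words), and — unlike the strong statement $\incFun{\re'}=0$ of \cref{lemma:zero-inc} — it is the version that survives the passage to the size metric and to the shuffle operator in \cref{sec:shuffle}. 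Keeping it explicit now lets the later proofs reuse exactly this argument scheme.
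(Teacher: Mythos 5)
Your argument is exactly the paper's: the corollary is stated there as ``a direct consequence of \cref{theo:inc-bound} and \cref{lemma:zero-inc}'', and your one-line substitution of $\incFun{\re'}=0$ into $\theight{\re'}\leq\theight{\re}+\incFun{\re}$ is precisely that consequence spelled out. Your remarks on why the invariant is phrased with the redundant $-\incFun{\re'}$ term also match the paper's stated motivation, so there is nothing to correct.
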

\begin{proof}
 A direct consequence of \cref{theo:inc-bound} and \cref{lemma:zero-inc}.
\end{proof}

To provide intuition for the invariant of \cref{cor:inv}, let us consider the following equivalent form:
\begin{equation}\label{invariant}
 \theight{\re'}+\incFun{\re'}\leq\theight{\re}+\incFun{\re}
\end{equation}

As noted in the previous page, $\theight{\re'}+\incFun{\re'}$ and $\theight{\re}+\incFun{\re}$ define the upper bound of the height of all partial derivatives of $\re'$ and $\re$, respectively. But $\re'$ is a partial derivative of $\re$ because $\re\parDer{\sym}\re'$, hence, by transitivity, the set $P'$ of all partial derivatives of $\re'$ is a subset of all partial derivatives $P$ of $\re$.

The following theorem shows that the invariant property of \cref{cor:inv} can be extended to derivatives \wrt~ any words.
\begin{theorem}\label[theorem]{theo:gen-inc-bound}
 For all $\re,\re'\in\reSet$, and $\word\in\symAlph^*$, if $\re\parDer{\word}\re'$, then $\theight{\re'}\leq\theight{\re}+\incFun{\re}-\incFun{\re'}$.
\end{theorem}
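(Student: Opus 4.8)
The plan is to prove \cref{theo:gen-inc-bound} by induction on the length of the word $\word$, mirroring the inductive definition of the multi-step relation $\parDer{\word}$ in \cref{def:parDer}. It is convenient to carry the claim in the equivalent telescoping form \eqref{invariant}, namely $\theight{\re'}+\incFun{\re'}\leq\theight{\re}+\incFun{\re}$, since this is exactly the shape that composes cleanly across steps.

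For the base case, $\word=\emptyWord$ forces $\re'=\re$ by \cref{def:parDer}, so both sides of the inequality equal $\theight{\re}+\incFun{\re}$ and the statement holds (with equality). For the inductive step, write $\word=\sym\strcons\word'$. By \cref{def:parDer} there is an intermediate expression $\re''\in\reSet$ with $\re\parDer{\sym}\re''$ and $\re''\parDer{\word'}\re'$. The single first step is covered by \cref{cor:inv}, which gives $\theight{\re''}+\incFun{\re''}\leq\theight{\re}+\incFun{\re}$; the remaining $\word'$-steps are covered by the induction hypothesis applied to $\re''\parDer{\word'}\re'$, giving $\theight{\re'}+\incFun{\re'}\leq\theight{\re''}+\incFun{\re''}$. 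Chaining the two inequalities yields $\theight{\re'}+\incFun{\re'}\leq\theight{\re}+\incFun{\re}$, which rearranges to the stated bound $\theight{\re'}\leq\theight{\re}+\incFun{\re}-\incFun{\re'}$.

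There is no real obstacle in this argument: it is a routine transitivity (telescoping) of the single-step invariant. The one point worth stressing — and the reason the proof works — is that the naive bound $\theight{\re'}\leq\theight{\re}+1$ is \emph{not} self-composing, because after one step the height may already have grown by one, and nothing in that weaker statement prevents a further increase. The genuinely substantive fact, that the one-unit increment cannot recur, was already isolated in \cref{lemma:zero-inc} and folded into \cref{cor:inv}; keeping $\incFun{}$ inside the invariant is precisely what records this, so the present theorem inherits it for free and the induction closes immediately.
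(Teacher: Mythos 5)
Your proof is correct and follows essentially the same route as the paper's: induction on the length of $\word$, using \cref{cor:inv} for the first step and the induction hypothesis for the rest, then chaining the two inequalities. Rewriting the claim in the telescoping form $\theight{\re'}+\incFun{\re'}\leq\theight{\re}+\incFun{\re}$ is only a cosmetic difference from the paper's transitivity computation.
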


\begin{proof}
 By induction on the length of $\word$.
 \begin{description}
  \item[base case:] if $\word=\emptyWord$, then by definition $\re'=\re$, hence
   $\theight{\re'}=\theight{\re}$ and $\incFun{\re}=\incFun{\re'}$, therefore $\theight{\re'}=\theight{\re}\leq\theight{\re}+\incFun{\re}-\incFun{\re'}$.

  \item[inductive step:]
   if $\word=\sym\strcons\word'$ for some $\sym\in\symAlph$, $\word'\in\symAlph^*$, then by definition there exists $\re''\in\reSet$ s.t. $\re\parDer{\sym}\re''$ and $\re''\parDer{\word'}\re'$. By \cref{cor:inv} $\theight{\re''}\leq\theight{\re}+\incFun{\re}-\incFun{\re''}$ and by inductive hypothesis $\theight{\re'}\leq\theight{\re''}+\incFun{\re''}-\incFun{\re'}$. Therefore by transitivity,
   $\theight{\re'}\leq\theight{\re''}+\incFun{\re''}-\incFun{\re'}\leq\theight{\re}+\incFun{\re}-\incFun{\re''}+\incFun{\re''}-\incFun{\re'}=\theight{\re}+\incFun{\re}-\incFun{\re'}$.
 \end{description}
\end{proof}

The proof of \cref{theo:gen-inc-bound} is independent from the definition of $\incSym$, providing that the invariant of \cref{cor:inv} holds. Therefore, the crucial point  is the definition of the function returning the upper bound of the increment of a partial derivation.

The result on the space complexity of derivatives can be directly derived from  \cref{theo:gen-inc-bound} and \cref{lemma:bounds}.

\begin{corollary}\label[corollary]{cor:gen-inc-bound}
 For all $\re,\re'\in\reSet$, and $\word\in\symAlph^*$, if $\re\parDer{\word}\re'$, then $\theight{\re'}\leq\theight{\re}+1$.
\end{corollary}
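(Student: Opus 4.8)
The plan is to obtain this as a direct consequence of the two results that immediately precede it: the invariant-form bound of \cref{theo:gen-inc-bound} and the numeric bounds of \cref{lemma:bounds}. Concretely, suppose $\re\parDer{\word}\re'$. By \cref{theo:gen-inc-bound} we have $\theight{\re'}\leq\theight{\re}+\incFun{\re}-\incFun{\re'}$. Now I would bound the right-hand side: by \cref{lemma:bounds}, $\incFun{\re}\leq 1$ and $\incFun{\re'}\geq 0$, so $\incFun{\re}-\incFun{\re'}\leq 1-0 = 1$. Chaining the two inequalities gives $\theight{\re'}\leq\theight{\re}+1$, which is exactly the claim.

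There is essentially no obstacle here: the heavy lifting has already been done in \cref{theo:inc-bound} (single-symbol increment bounded by $\incFun{\re}$), \cref{lemma:zero-inc} (after one step the increment function drops to $0$), \cref{cor:inv} (the per-step invariant $\theight{\re'}+\incFun{\re'}\leq\theight{\re}+\incFun{\re}$), and \cref{theo:gen-inc-bound} (lifting that invariant to arbitrary words by induction on $\abs{\word}$). The corollary is just the step of discarding the refined bookkeeping term $-\incFun{\re'}$ and replacing $\incFun{\re}$ by its worst-case value $1$. The only thing worth stating explicitly is that the empty-word case is already covered, since $\re\parDer{\emptyWord}\re$ gives $\theight{\re'}=\theight{\re}\leq\theight{\re}+1$, and indeed \cref{theo:gen-inc-bound} handles this uniformly.

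So the proof is simply: \emph{Directly from \cref{theo:gen-inc-bound} and \cref{lemma:bounds}, since $\incFun{\re}-\incFun{\re'}\leq 1$.} If a fuller rendering is desired, I would spell out the one-line chain $\theight{\re'}\leq\theight{\re}+\incFun{\re}-\incFun{\re'}\leq\theight{\re}+1-0=\theight{\re}+1$ and nothing more.
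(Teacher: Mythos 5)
Your proposal is correct and takes exactly the same route as the paper: the paper's proof also applies \cref{theo:gen-inc-bound} to get $\theight{\re'}\leq\theight{\re}+\incFun{\re}-\incFun{\re'}$ and then uses \cref{lemma:bounds} to bound $\incFun{\re}-\incFun{\re'}\leq 1$. Nothing is missing.
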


\begin{proof}
 By \cref{theo:gen-inc-bound} $\theight{\re'}\leq\theight{\re}+\incFun{\re}-\incFun{\re'}$. By \cref{lemma:bounds}
 $\incFun{\re}-\incFun{\re'}\leq 1$, hence
 $\theight{\re'}\leq\theight{\re}+1$.
\end{proof}

\subsection{Size of partial derivatives}\label{sec:size}
In this section we investigate the upper bound on the size of the partial derivatives of a given regular expression, defined in a standard way as follows:
$$
 \begin{array}{c}
  \tsize{\eps}=\tsize{\sym}=1                                                    \qquad
  \tsize{\re_0\catop\re_1}=\tsize{\re_0\orop\re_1}=\tsize{\re_0}+\tsize{\re_1}+1 \qquad
  \tsize{\starop{\re}}=\tsize{\re}+1
 \end{array}
$$

The results of~\cref{sec:height} must be adapted to the less trivial case of the size of expressions. Indeed, while for the height the upper bound of the increment is the constant 1, for the size this value depends on the square of the initial regular expression.

For instance, let us consider the following reduction steps which compute the partial derivative of $\starop{\starop{\starop{\sym}}}$
\wrt~ $\sym\sym$:
\begin{flushleft}
 $
  \begin{array}{l}
   \starop{\starop{\starop{\sym}}}\parDer{\sym}((\eps\catop\starop{\sym})\catop\starop{\starop{\sym}})\catop\starop{\starop{\starop{\sym}}}\parDer{\sym}((\eps\catop\starop{\sym})\catop\starop{\starop{\sym}})\catop\starop{\starop{\starop{\sym}}} \\
   \tsize{\starop{\starop{\starop{\sym}}}}=4\qquad  \tsize{((\eps\catop\starop{\sym})\catop\starop{\starop{\sym}})\catop\starop{\starop{\starop{\sym}}}}=13
  \end{array}
 $
\end{flushleft}
After the first step the size increases by 9 and remains unchanged after the second step.
For expressions of the general shape $\re=\starop{\sym}\ldots*$ it can be proved by induction on $n=\tsize{\re}\geq 2$ that if $\re\parDer{\sym}\re'$, then
$\tsize{\re'}=\tsize{\re}+\frac{n^2+n}{2}-1$.

As another example, let us consider the following reduction steps which compute the partial derivative of $(\starop{\sym}\catop\starop{\asym})\catop\starop{\yasym}$
\wrt~ $\sym\asym\yasym$:
\begin{flushleft}
 $
  \begin{array}{l}
   \sym\catop\starop{\starop{\asym}}\parDer{\sym}\eps\catop\starop{\starop{\asym}}\parDer{\asym}(\eps\catop\starop{\asym})\catop\starop{\starop{\asym}} \\
   \tsize{\sym\catop\starop{\starop{\asym}}}=5\qquad  \tsize{\eps\catop\starop{\starop{\asym}}}=5 \qquad \tsize{(\eps\catop\starop{\asym})\catop\starop{\starop{\asym}}}=8
  \end{array}
 $
\end{flushleft}
After the first step the size does not change, while increases by 3 after the second step. This example shows that the strong claim of \cref{lemma:zero-inc} cannot hold for $\tsize{\ }$.

% As another example, let us consider the following reduction steps which compute the partial derivative of $(\starop{\sym}\catop\starop{\asym})\catop\starop{\yasym}$
% \wrt~ $\sym\asym\yasym$:
% \begin{flushleft}
%  $
%   \begin{array}{l}
%    (\starop{\sym}\catop\starop{\asym})\catop\starop{\yasym}\parDer{\sym} ((\eps\catop\starop{\sym})\catop\starop{\asym})\catop\starop{\yasym}\parDer{\asym} (\eps\catop\starop{\asym})\catop\starop{\yasym}\parDer{\yasym}\eps\catop\starop{\yasym} \\
%    \tsize{(\starop{\sym}\catop\starop{\asym})\catop\starop{\yasym}}=8\qquad  \tsize{((\eps\catop\starop{\sym})\catop\starop{\asym})\catop\starop{\yasym}}=10 \qquad \tsize{(\eps\catop\starop{\asym})\catop\starop{\yasym}}=7 \qquad \tsize{\eps\catop\starop{\yasym}}=4
%   \end{array}
%  $
% \end{flushleft}
% After the first step the size increases by 2, while decreases by 3 after each of the other steps.

We follow the same methodology as in \cref{sec:height} and define the function $\incFunSize{\re}$, analogous to $\incFun{\re}$, for the size of $\re$.

\begin{definition}\label[definition]{fig:incFunSize}
 The function $\incFunSize{\re}$ is recursively defined as follows:
 $$
  \begin{array}{l}
   \incFunSize{\sym}=\incFunSize{\eps}=0                                                                       \\
   \incFunSize{\re_0\catop\re_1}=\max(\incFunSize{\re_0},\incFunSize{\re_1}-\tsize{\re_0}-1)                   \\
   \incFunSize{\re_0\orop\re_1}= \max(\incFunSize{\re_0}-\tsize{\re_1}-1,\incFunSize{\re_1}-\tsize{\re_0}-1,0) \\
   \incFunSize{\starop{\re}}=\tsize{\re}+\incFunSize{\re}+1
  \end{array}
 $$
\end{definition}
The following lemma is analogous to \cref{lemma:bounds}.
\begin{lemma}\label[lemma]{lemma:bounds-size}
 For all $\re\in\reSet$,
 $0\leq\incFunSize{\re}\leq \tsize{\re}^2$.
\end{lemma}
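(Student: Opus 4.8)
The plan is to prove \cref{lemma:bounds-size} by structural induction on $\re$, exactly mirroring the proof of \cref{lemma:bounds}, but keeping track of the extra terms that appear because the size metric is additive rather than based on $\max$. The lower bound $0\leq\incFunSize{\re}$ is immediate in every case: the base cases give $0$, the $\orop$ case has $0$ explicitly in its $\max$, the $\starop{}$ case is a sum of non-negative quantities (using $\tsize{\re}\geq 1$ and the inductive lower bound), and the $\catop$ case has $\incFunSize{\re_0}\geq 0$ as one of its arguments by the inductive hypothesis. So the real content is the upper bound $\incFunSize{\re}\leq\tsize{\re}^2$.

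For the upper bound I would proceed case by case. The base cases $\re=\sym,\eps$ are trivial since $0\leq 1=\tsize{\re}^2$. For $\re=\re_0\catop\re_1$, write $n_0=\tsize{\re_0}$, $n_1=\tsize{\re_1}$, so $\tsize{\re}=n_0+n_1+1$. We have $\incFunSize{\re}=\max(\incFunSize{\re_0},\incFunSize{\re_1}-n_0-1)\leq\max(n_0^2,n_1^2)$ by the inductive hypothesis (dropping the $-n_0-1$ term, which only helps), and $\max(n_0^2,n_1^2)\leq (n_0+n_1+1)^2=\tsize{\re}^2$ since $n_0,n_1\geq 1$. The $\re=\re_0\orop\re_1$ case is analogous: each argument of the outer $\max$ is bounded by $\max(n_0^2,n_1^2,0)\leq\tsize{\re}^2$. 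For $\re=\starop{\re_0}$ with $n_0=\tsize{\re_0}$, so $\tsize{\re}=n_0+1$, we get $\incFunSize{\re}=n_0+\incFunSize{\re_0}+1\leq n_0+n_0^2+1$ by the inductive hypothesis, and the goal is $n_0+n_0^2+1\leq(n_0+1)^2=n_0^2+2n_0+1$, which holds since $n_0\geq 1$ implies $n_0\leq 2n_0$.

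The main obstacle, such as it is, is simply choosing the bound loose enough that the arithmetic closes without fuss: the $\starop{}$ case is the one where the quadratic bound is genuinely used rather than a cruder linear bound, so one has to make sure the inductive hypothesis $\incFunSize{\re_0}\leq\tsize{\re_0}^2$ leaves enough slack — and it does, since expanding $(\tsize{\re_0}+1)^2$ produces the cross term $2\tsize{\re_0}$ which dominates the $\tsize{\re_0}+1$ we need to absorb. I would also note explicitly that every regular expression has size at least $1$ (an easy observation from the definition of $\tsize{\ }$), since this fact is used silently in the $\catop$ and $\orop$ cases to bound $\max(n_0^2,n_1^2)$ by $(n_0+n_1+1)^2$. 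No deeper combinatorial insight is required; the lemma is, as the paper says, "analogous to \cref{lemma:bounds}", and the proof is a routine induction.
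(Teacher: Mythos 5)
Your proof is correct and follows essentially the same route as the paper's: a structural induction in which the $\catop$ and $\orop$ cases discard the negative correction terms and the $\starop{}$ case is the only one that genuinely uses the quadratic slack via $(\tsize{\re_0}+1)^2=\tsize{\re_0}^2+2\tsize{\re_0}+1$. The only differences are cosmetic — you bound $\max(n_0^2,n_1^2)$ by $\tsize{\re}^2$ directly where the paper interposes $n_0^2+n_1^2\leq(n_0+n_1)^2$, and you spell out the lower bound that the paper dismisses as straightforward.
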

\begin{proof}
 Directly by induction on the definition of $\incFunSize{\re}$.
 We only show the proof for $\incFunSize{\re}\leq \tsize{\re}^2$ and omit the straightforward proof for $0\leq\incFunSize{\re}$.
 \begin{description}
  \item[base case:]  $\incFunSize{\sym}=\incFunSize{\eps}=0\leq 1 = \tsize{\sym}^2=\tsize{\eps}^2$
  \item[inductive step:]
   the proof proceeds by case analysis on the shape of $\re$.
   \begin{itemize}
    \item $\incFunSize{\re_0\catop\re_1}=\max(\incFunSize{\re_0},\incFunSize{\re_1}-\tsize{\re_0}-1)\leq\max(\incFunSize{\re_0},\incFunSize{\re_1})\leq\max(\tsize{\re_0}^2,\tsize{\re_1}^2)\leq \tsize{\re_0}^2+\tsize{\re_1}^2 \leq (\tsize{\re_0}+\tsize{\re_1})^2\leq (\tsize{\re_0}+\tsize{\re_1}+1)^2=\tsize{\re_0\catop\re_1}^2$, by the inductive hypothesis, the definition of $\max$, of square and of $\tsize{\ }$, and by the straightforward property $\tsize{\re}\geq 1$.
    \item $\incFunSize{\re_0\orop\re_1}\leq\tsize{\re_0\orop\re_1}^2$ can be proven similarly as in the previous case.
    \item $\incFunSize{\starop{\re}}=\tsize{\re}+\incFunSize{\re}+1\leq \tsize{\re}^2+\tsize{\re}+1\leq \tsize{\re}^2+2\cdot\tsize{\re}+1=(\tsize{\re}+1)^2=\tsize{\starop{\re}}^2$, by the inductive hypothesis, the definition of square and of $\tsize{\ }$, and by the straightforward property $\tsize{\re}\geq 1$.
   \end{itemize}
 \end{description}

\end{proof}

As for $\incFun{\re}$, the definition of $\incFunSize{\re}$ is driven by the reduction rules in \cref{fig:parDer}, and by the definition of size. We comment only the non-trivial cases: if $\re=\re_0\orop\re_1$, then the partial derivative can be obtained
by applying either rule \rn{l-or} or \rn{r-or}. For the former rule,
the partial derivative of $\re_0$ can increase of at most $\incFunSize{\re_0}$, but all the nodes of $\re_1$ and the $\orop$ operator are removed, which corresponds to $\incFunSize{\re_0}-\tsize{\re_1}-1$. The reasoning for the latter rule is symmetric. Then the maximum between these two values has to be considered to return a valid upper bound; furthermore, the bound must be non-negative for the same reason explained for $\incSym$.
If $\re=\starop{\re_0}$, then the new term $\re'$ needs to be considered, whose size is bounded by $\tsize{\re}+\incFunSize{\re}$. Furthermore, the concatenation operator is added, hence the function returns $\tsize{\re}+\incFunSize{\re}+1$. Finally, in the definition for $\re=\re_0\catop\re_1$ rule \rn{r-cat} is managed as rule \rn{r-or} (or \rn{l-or}).
In rule \rn{l-cat} $\re_0$ is replaced by $\re_0'$, while $\re_1$ and concatenation are kept, hence the upper bound of the increment is $\incFunSize{\re_0}$. As for $\re_0\orop\re_1$ the maximum needs to be considered.

Although the analogous of \cref{lemma:zero-inc} cannot be proved for $\incSymSize$, the invariant of \cref{cor:inv} holds, and can be proved directly.
\begin{theorem}\label[theorem]{theo:inc-bound-size}
 For all $\re,\re'\in\reSet$, $\sym\in\symAlph$, if $\re\parDer{\sym}\re'$, then $\tsize{\re'}\leq \tsize{\re}+\incFunSize{\re}-\incFunSize{\re'}$.
\end{theorem}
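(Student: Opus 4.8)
The plan is to mirror the proof of \cref{theo:inc-bound} exactly: proceed by induction on the derivation of $\re\parDer{\sym}\re'$, with a case analysis on the last rule applied. The base case is rule \rn{sym}, where $\re=\sym$ and $\re'=\eps$; here $\tsize{\eps}=1=\tsize{\sym}=\tsize{\sym}+0-0=\tsize{\sym}+\incFunSize{\sym}-\incFunSize{\eps}$, since $\incFunSize{\sym}=\incFunSize{\eps}=0$. Each inductive case then has the shape: invoke the inductive hypothesis on the premise (e.g.\ $\tsize{\re_0'}\leq\tsize{\re_0}+\incFunSize{\re_0}-\incFunSize{\re_0'}$), compute $\tsize{\re'}$ from the definition of $\tsize{\ }$ on the conclusion's expression, substitute, and then check that the resulting bound is dominated by $\tsize{\re}+\incFunSize{\re}-\incFunSize{\re'}$ using the recursive definition of $\incSymSize$ (and \cref{lemma:bounds-size} for non-negativity where needed).

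The cases in order: For \rn{l-cat}, $\re=\re_0\catop\re_1\parDer{\sym}\re_0'\catop\re_1=\re'$. Then $\tsize{\re'}=\tsize{\re_0'}+\tsize{\re_1}+1\leq\tsize{\re_0}+\incFunSize{\re_0}-\incFunSize{\re_0'}+\tsize{\re_1}+1=\tsize{\re}+\incFunSize{\re_0}-\incFunSize{\re_0'}$, so it remains to show $\incFunSize{\re_0}-\incFunSize{\re_0'}\leq\incFunSize{\re}-\incFunSize{\re'}$, i.e.\ $\incFunSize{\re_0'\catop\re_1}\leq\incFunSize{\re_0\catop\re_1}-\incFunSize{\re_0}+\incFunSize{\re_0'}$, which follows because $\incFunSize{\re_0'\catop\re_1}=\max(\incFunSize{\re_0'},\incFunSize{\re_1}-\tsize{\re_0'}-1)$ and $\incFunSize{\re_0\catop\re_1}=\max(\incFunSize{\re_0},\incFunSize{\re_1}-\tsize{\re_0}-1)$: in the first branch the desired inequality reduces to $\incFunSize{\re_0}\geq\incFunSize{\re_0}$ (trivial once the $\max$ on the right is lower-bounded by its first argument), and in the second branch one uses $\tsize{\re_0'}\geq\tsize{\re_0}$ is \emph{not} generally true, so instead one uses the IH $\tsize{\re_0'}\leq\tsize{\re_0}+\incFunSize{\re_0}-\incFunSize{\re_0'}$ to relate $\incFunSize{\re_1}-\tsize{\re_0'}-1$ to $\incFunSize{\re_1}-\tsize{\re_0}-1-\incFunSize{\re_0}+\incFunSize{\re_0'}$. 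For \rn{r-cat} (with $\hasEps{\re_0}=\eps$), $\re'=\re_1'$, and $\tsize{\re_1'}\leq\tsize{\re_1}+\incFunSize{\re_1}-\incFunSize{\re_1'}$; one needs $\tsize{\re_1}+\incFunSize{\re_1}\leq\tsize{\re}+\incFunSize{\re}=\tsize{\re_0}+\tsize{\re_1}+1+\incFunSize{\re}$, i.e.\ $\incFunSize{\re_1}-\tsize{\re_0}-1\leq\incFunSize{\re_0\catop\re_1}$, which is immediate from the definition of $\incFunSize{\re_0\catop\re_1}$ as a $\max$ whose second argument is exactly $\incFunSize{\re_1}-\tsize{\re_0}-1$. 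Rules \rn{l-or}/\rn{r-or} are handled symmetrically against the three-way $\max$ in $\incFunSize{\re_0\orop\re_1}$ (the extra $0$ and the non-negativity from \cref{lemma:bounds-size} being used to absorb the case $\incFunSize{\re'}$ could be larger). Rule \rn{star}: $\re=\starop{\re_0}\parDer{\sym}\re_0'\catop\starop{\re_0}=\re'$, so $\tsize{\re'}=\tsize{\re_0'}+\tsize{\re_0}+1+1\leq\tsize{\re_0}+\incFunSize{\re_0}-\incFunSize{\re_0'}+\tsize{\re_0}+2$; comparing with $\tsize{\re}+\incFunSize{\re}-\incFunSize{\re'}=\tsize{\re_0}+1+\tsize{\re_0}+\incFunSize{\re_0}+1-\incFunSize{\re'}$ reduces the goal to $\incFunSize{\re_0'\catop\starop{\re_0}}\leq 0\cdot(\dots)$—more precisely to $\incFunSize{\re'}\leq\incFunSize{\re_0'}$, which holds since $\incFunSize{\re_0'\catop\starop{\re_0}}=\max(\incFunSize{\re_0'},\incFunSize{\starop{\re_0}}-\tsize{\re_0'}-1)$ and $\incFunSize{\starop{\re_0}}-\tsize{\re_0'}-1=\tsize{\re_0}+\incFunSize{\re_0}+1-\tsize{\re_0'}-1$, which is $\leq\incFunSize{\re_0'}$ by the IH rearranged as $\incFunSize{\re_0'}\geq\tsize{\re_0}+\incFunSize{\re_0}-\tsize{\re_0'}$.

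The main obstacle I anticipate is the \rn{l-cat} and \rn{star} cases, precisely because the bound being tracked now couples two quantities ($\tsize{\ }$ and $\incSymSize$), and the key algebraic fact needed is the \emph{``telescoping''} property $\incFunSize{\re_0'\catop\re_1}\leq\incFunSize{\re_0\catop\re_1}+(\tsize{\re_0'}-\tsize{\re_0})$ whenever $\re_0'$ is a partial derivative of $\re_0$—an inequality that does not follow from the IH on $\tsize{\ }$ alone but requires simultaneously carrying the IH in the invariant form $\tsize{\re_0'}+\incFunSize{\re_0'}\leq\tsize{\re_0}+\incFunSize{\re_0}$. I would therefore either strengthen the induction statement to carry the invariant $\tsize{\re'}+\incFunSize{\re'}\leq\tsize{\re}+\incFunSize{\re}$ directly (which is equivalent to the stated conclusion and is exactly what \eqref{invariant} did for height), or prove a small auxiliary monotonicity lemma about $\incSymSize$ under the $\parDer{\sym}$ step first and then feed it into the main induction. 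The remaining cases are routine $\max$-manipulations of the kind already displayed in the proof of \cref{lemma:bounds-size}, so I would write them compactly, citing "the definition of $\max$, $\tsize{\ }$, $\incSymSize$, the inductive hypothesis, and \cref{lemma:bounds-size}" in the same telegraphic style as \cref{theo:inc-bound}.
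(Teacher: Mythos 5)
Your plan follows the paper's proof almost exactly: induction on the derivation of $\re\parDer{\sym}\re'$ with a case analysis on the last rule applied, and, in the \rn{l-cat} and \rn{star} cases, a split on which argument of the $\max$ in the definition of $\incFunSize{\cdot\catop\cdot}$ is realized. The \rn{r-cat} and \rn{l-or}/\rn{r-or} cases are handled exactly as in the paper, via the observation that $\incFunSize{\re_1}-\tsize{\re_0}-1$ is one of the arguments of the relevant $\max$. (Incidentally, your base case computation $\tsize{\eps}=1=\tsize{\sym}$ is the correct one.)

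There is, however, one step that would fail as literally written: in the \rn{star} case you reduce the goal to $\incFunSize{\re'}\leq\incFunSize{\re_0'}$ and justify the bound on the second $\max$-argument, $\incFunSize{\starop{\re_0}}-\tsize{\re_0'}-1\leq\incFunSize{\re_0'}$, ``by the IH rearranged as $\incFunSize{\re_0'}\geq\tsize{\re_0}+\incFunSize{\re_0}-\tsize{\re_0'}$''. The inductive hypothesis $\tsize{\re_0'}\leq\tsize{\re_0}+\incFunSize{\re_0}-\incFunSize{\re_0'}$ rearranges to the \emph{opposite} inequality $\incFunSize{\re_0'}\leq\tsize{\re_0}+\incFunSize{\re_0}-\tsize{\re_0'}$, so $\incFunSize{\re'}$ may strictly exceed $\incFunSize{\re_0'}$ precisely in the branch where the second argument of the $\max$ dominates. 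The claim is not needed, though: in that branch $\tsize{\re'}+\incFunSize{\re'}=\tsize{\re_0'}+\tsize{\starop{\re_0}}+1+\incFunSize{\starop{\re_0}}-\tsize{\re_0'}-1=\tsize{\starop{\re_0}}+\incFunSize{\starop{\re_0}}$ holds with equality and without invoking the IH at all, because the $\tsize{\re_0'}$ terms cancel. This is exactly the computation the paper performs in its second subcase for \rn{star} (and analogously for \rn{l-cat}), and it is also what your own fallback delivers: restating the goal in the invariant form $\tsize{\re'}+\incFunSize{\re'}\leq\tsize{\re}+\incFunSize{\re}$ and splitting on the $\max$ makes both branches immediate (IH in the first, cancellation in the second). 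So your proposed repair coincides with the paper's argument; only the specific inequality you cite in the star case must be dropped.
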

\begin{proof}
 By induction and case analysis on the rules defining $\re\parDer{\sym}\re'$.
 \begin{description}
  \item[base case:] The only base rule is \rn{sym}.
   Hence, $\re=\sym$, $\re'=\eps$, and
   $\tsize{\eps}=0=\tsize{\sym}=\tsize{\sym}+0-0=\tsize{\sym}+\incFunSize{\sym}-\incFunSize{\eps}$.
  \item[inductive step:] the proof proceeds by case analysis on the rule applied
   at the root of the derivation tree.
   \begin{itemize}
    \item $\Rule{l-cat}{\re_0\parDer{\sym}\re'_0}{\re_0\catop\re_1\parDer{\sym}\re'_0\catop\re_1}{}$\\[2ex]
          By inductive hypothesis $\tsize{\re_0'}\leq \tsize{\re_0}+\incFunSize{\re_0}-\incFunSize{\re_0'}$.
          We distinguish two cases:
          \begin{itemize}
           \item $\incFunSize{\re_0'}\geq\incFunSize{\re_1}-\tsize{\re_0'}-1$:
                 $\tsize{\re_0'\catop\re_1}\stackrel{\mathrm{def}}{=}\tsize{\re'_0}+\tsize{\re_1}+1\leq\tsize{\re_0}+\incFunSize{\re_0}-\incFunSize{\re_0'}+\tsize{\re_1}+1=\tsize{\re_0\catop\re_1}+\incFunSize{\re_0}-\incFunSize{\re_0'}=\tsize{\re_0\catop\re_1}+\incFunSize{\re_0}-\incFunSize{\re_0'\catop\re_1}\leq\tsize{\re_0\catop\re_1}+\incFunSize{\re_0\catop\re_1}-\incFunSize{\re_0'\catop\re_1}$, by the inductive hypothesis, the definition of $\max$, $\tsize{\ }$, $\incSymSize$, and the assumption $\incFunSize{\re_0'}\geq\incFunSize{\re_1}-\tsize{\re_0'}-1$.

           \item $\incFunSize{\re_0'}<\incFunSize{\re_1}-\tsize{\re_0'}-1$:
                 $\tsize{\re_0'\catop\re_1}\stackrel{\mathrm{def}}{=}\tsize{\re'_0}+\tsize{\re_1}+1=\tsize{\re'_0}+\tsize{\re_1}+1+\incFunSize{\re'_0\catop\re_1}-\incFunSize{\re'_0\catop\re_1}=\tsize{\re'_0}+\tsize{\re_1}+1+\incFunSize{\re_1}-\tsize{\re_0'}-1-\incFunSize{\re'_0\catop\re_1}=\tsize{\re_1}+\incFunSize{\re_1}-\incFunSize{\re'_0\catop\re_1}=\tsize{\re_0\catop\re_1}+\incFunSize{\re_1}-\tsize{\re_0}-1-\incFunSize{\re'_0\catop\re_1}\leq\tsize{\re_0\catop\re_1}+\incFunSize{\re_0\catop\re_1}-\incFunSize{\re'_0\catop\re_1}$, by the definition of $\max$, $\tsize{\ }$, $\incSymSize$, and the assumption $\incFunSize{\re_0'}<\incFunSize{\re_1}-\tsize{\re_0'}-1$.
          \end{itemize}

    \item $\Rule{r-cat}{\re_1\parDer{\sym}\re'_1}{\re_0\catop\re_1\parDer{\sym}\re'_1}{\hasEps{\re_0}=\eps}$\\[2ex]
          By inductive hypothesis $\tsize{\re_1'}\leq \tsize{\re_1}+\incFunSize{\re_1}-\incFunSize{\re'_1}$.

          Therefore, $\tsize{\re_1'}\leq \tsize{\re_1}+\incFunSize{\re_1}-\incFunSize{\re'_1}=\tsize{\re_0\catop\re_1}+\incFunSize{\re_1}-\tsize{\re_0}-1-\incFunSize{\re'_1}\leq\tsize{\re_0\catop\re_1}+\incFunSize{\re_0\catop\re_1}-\incFunSize{\re'_1}$, by the inductive hypothesis, the definition of $\max$, $\tsize{\ },$ and $\incSymSize$.

    \item the case for rules \rn{l-or} and \rn{r-or} is analogous to that for rule \rn{r-cat}.
    \item $\Rule{star}{\re\parDer{\sym}\re'}{\starop{\re}\parDer{\sym}\re'\catop\starop{\re}}{}$\\[2ex]
          By inductive hypothesis $\tsize{\re'}\leq \tsize{\re}+\incFunSize{\re}-\incFunSize{\re'}$.

          We distinguish two cases:
          \begin{itemize}
           \item $\incFunSize{\re'}\geq\incFunSize{\starop{\re}}-\tsize{\re'}-1$:
                 $\tsize{\re'\catop\starop{\re}}\stackrel{\mathrm{def}}{=}\tsize{\re'}+\tsize{\starop{\re}}+1\leq\tsize{\re}+\incFunSize{\re}-\incFunSize{\re'}+\tsize{\starop{\re}}+1=\tsize{\starop{\re}}+\incFunSize{\starop{\re}}-\incFunSize{\re'}=\tsize{\starop{\re}}+\incFunSize{\starop{\re}}-\incFunSize{\re'\catop\starop{\re}}$, by the inductive hypothesis, the definition of $\max$, $\tsize{\ }$, $\incSymSize$, and the assumption $\incFunSize{\re'}\geq\incFunSize{\starop{\re}}-\tsize{\re'}-1$.

           \item $\incFunSize{\re'}<\incFunSize{\starop{\re}}-\tsize{\re'}-1$:
                 $\tsize{\re'\catop\starop{\re}}\stackrel{\mathrm{def}}{=}\tsize{\re'}+\tsize{\starop{\re}}+1=\tsize{\re'}+\tsize{\starop{\re}}+1+\incFunSize{\re'\catop\starop{\re}}-\incFunSize{\re'\catop\starop{\re}}=\tsize{\re'}+\tsize{\starop{\re}}+1+\incFunSize{\starop{\re}}-\tsize{\re'}-1-\incFunSize{\re'\catop\starop{\re}}=\tsize{\starop{\re}}+\incFunSize{\starop{\re}}-\incFunSize{\re'\catop\starop{\re}}$, by the definition of $\max$, $\tsize{\ }$, $\incSymSize$, and the assumption $\incFunSize{\re'}<\incFunSize{\starop{\re}}-\tsize{\re'}-1$.
          \end{itemize}
   \end{itemize}
 \end{description}
\end{proof}

Having proved the invariant of \cref{theo:inc-bound-size}, we can derive the main result analogously as done in \cref{sec:height}.
\begin{theorem}\label[theorem]{theo:gen-inc-bound-size}
 For all $\re,\re'\in\reSet$, and $\word\in\symAlph^*$, if $\re\parDer{\word}\re'$, then $\tsize{\re'}\leq\tsize{\re}+\incFunSize{\re}-\incFunSize{\re'}$.
\end{theorem}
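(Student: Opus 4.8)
The plan is to mirror exactly the induction used for the height case (\cref{theo:gen-inc-bound}), since that proof was deliberately written to be independent of the particular increment function: it only relies on the single‑step invariant $\tsize{\re'}\leq\tsize{\re}+\incFunSize{\re}-\incFunSize{\re'}$ (now supplied by \cref{theo:inc-bound-size}) plus transitivity of $\leq$. So I would proceed by induction on the length of the word $\word$ witnessing $\re\parDer{\word}\re'$.

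For the base case, $\word=\emptyWord$, the definition of $\parDer{\emptyWord}$ forces $\re'=\re$, hence $\tsize{\re'}=\tsize{\re}$ and $\incFunSize{\re}=\incFunSize{\re'}$, so the claimed inequality $\tsize{\re'}\leq\tsize{\re}+\incFunSize{\re}-\incFunSize{\re'}$ holds trivially (both sides equal $\tsize{\re}$). For the inductive step, write $\word=\sym\strcons\word'$; by \cref{def:parDer} there is an intermediate $\re''\in\reSet$ with $\re\parDer{\sym}\re''$ and $\re''\parDer{\word'}\re'$. Apply \cref{theo:inc-bound-size} to the first step to get $\tsize{\re''}\leq\tsize{\re}+\incFunSize{\re}-\incFunSize{\re''}$, and the inductive hypothesis (on the shorter word $\word'$) to the tail to get $\tsize{\re'}\leq\tsize{\re''}+\incFunSize{\re''}-\incFunSize{\re'}$. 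Chaining the two and noting the $\incFunSize{\re''}$ terms cancel yields $\tsize{\re'}\leq\tsize{\re}+\incFunSize{\re}-\incFunSize{\re'}$, completing the induction.

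I do not expect any genuine obstacle here: unlike the height case, where \cref{lemma:zero-inc} gave the stronger $\incFun{\re'}=0$ after one step, for size no such strong statement holds (as the $(\eps\catop\starop{\asym})\catop\starop{\starop{\asym}}$ example shows), but the whole point of phrasing the single‑step result as the telescoping invariant $\tsize{\re'}+\incFunSize{\re'}\leq\tsize{\re}+\incFunSize{\re}$ is precisely that it already composes. The only mild subtlety is keeping the direction of the inequalities straight through the chain and confirming the cancellation of the middle term $\incFunSize{\re''}$; both are routine. Consequently the proof is essentially a verbatim transcription of the proof of \cref{theo:gen-inc-bound} with $\theight{\ }$ replaced by $\tsize{\ }$, $\incSym$ by $\incSymSize$, and the appeal to \cref{cor:inv} replaced by an appeal to \cref{theo:inc-bound-size}. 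From this theorem together with \cref{lemma:bounds-size} one would then obtain, exactly as \cref{cor:gen-inc-bound} follows from \cref{theo:gen-inc-bound}, the quadratic bound $\tsize{\re'}\leq\tsize{\re}+\tsize{\re}^2$ for arbitrary words.
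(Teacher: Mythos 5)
Your proof is correct and is exactly the paper's argument: the paper proves \cref{theo:gen-inc-bound-size} by simply pointing to the proof of \cref{theo:gen-inc-bound}, which is the same induction on the length of $\word$ with the telescoping single-step invariant, here supplied by \cref{theo:inc-bound-size}. No issues.
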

\begin{proof}
 See \cref{theo:gen-inc-bound}
\end{proof}

The result on the space complexity of derivatives directly follows from  \cref{theo:gen-inc-bound-size} and \cref{lemma:bounds-size}.

\begin{corollary}\label[corollary]{cor:gen-inc-bound-size}
 For all $\re,\re'\in\reSet$, and $\word\in\symAlph^*$, if $\re\parDer{\word}\re'$, then $\tsize{\re'}\leq\tsize{\re}+\tsize{\re}^2$. Hence $\tsize{\re'}$ is $O(\tsize{\re}^2)$.
\end{corollary}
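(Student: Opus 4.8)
The plan is to mirror exactly the derivation of \cref{cor:gen-inc-bound} from the previous subsection, since \cref{theo:gen-inc-bound-size} already packages all the work into the same invariant form $\tsize{\re'}\leq\tsize{\re}+\incFunSize{\re}-\incFunSize{\re'}$. First I would invoke \cref{theo:gen-inc-bound-size} on the hypothesis $\re\parDer{\word}\re'$ to obtain $\tsize{\re'}\leq\tsize{\re}+\incFunSize{\re}-\incFunSize{\re'}$. Next I would bound the term $\incFunSize{\re}-\incFunSize{\re'}$ using \cref{lemma:bounds-size}: since $\incFunSize{\re'}\geq 0$ and $\incFunSize{\re}\leq\tsize{\re}^2$, it follows that $\incFunSize{\re}-\incFunSize{\re'}\leq\tsize{\re}^2$. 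Substituting yields $\tsize{\re'}\leq\tsize{\re}+\tsize{\re}^2$. For the asymptotic claim, I would note that $\tsize{\re}\geq 1$ holds for every regular expression, so $\tsize{\re}+\tsize{\re}^2\leq 2\,\tsize{\re}^2$, whence $\tsize{\re'}$ is $O(\tsize{\re}^2)$.

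There is essentially no obstacle at this level: all the combinatorial difficulty has been discharged earlier, in \cref{theo:inc-bound-size} (the single-step invariant) and \cref{lemma:bounds-size} (the quadratic ceiling on $\incSymSize$). The only point worth flagging is that, unlike the height case where \cref{lemma:zero-inc} forced $\incFun{\re'}=0$ after the first step, here one cannot assert $\incFunSize{\re'}=0$; the earlier example with $\sym\catop\starop{\starop{\asym}}$ shows the size increment can be strictly positive after an intermediate step. We do not need that stronger statement, however — only $\incFunSize{\re'}\geq 0$ is used, which is exactly why the invariant was deliberately stated in the form $\tsize{\re'}\leq\tsize{\re}+\incFunSize{\re}-\incFunSize{\re'}$, so that the weaker (and true) nonnegativity of $\incSymSize$ is enough to conclude.

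Since the argument is a two-line chain of inequalities resting on results stated above, I expect the whole proof to be a short deduction of the form: apply \cref{theo:gen-inc-bound-size}, then \cref{lemma:bounds-size}, then the observation $\tsize{\re}\geq 1$ for the $O(\cdot)$ reformulation.
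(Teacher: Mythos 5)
Your proposal is correct and follows essentially the same route as the paper: apply \cref{theo:gen-inc-bound-size}, then bound $\incFunSize{\re}-\incFunSize{\re'}\leq\incFunSize{\re}\leq\tsize{\re}^2$ via \cref{lemma:bounds-size} (the paper's proof actually cites \cref{lemma:bounds} here, which appears to be a typo for \cref{lemma:bounds-size} — your citation is the right one). Your extra remark that $\tsize{\re}\geq 1$ justifies the $O(\tsize{\re}^2)$ reformulation is a small but welcome addition the paper leaves implicit.
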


\begin{proof}
 By \cref{theo:gen-inc-bound-size} $\tsize{\re'}\leq\tsize{\re}+\incFunSize{\re}-\incFunSize{\re'}$. By \cref{lemma:bounds}
 $\incFunSize{\re}-\incFunSize{\re'}\leq\incFunSize{\re}\leq\tsize{\re}^2$, hence
 $\tsize{\re'}\leq\tsize{\re}+\tsize{\re}^2$.
\end{proof}

\section{Regular expressions with shuffle}\label{sec:shuffle}

\subsection{Basic definitions}
In this section we extend the syntax of regular expressions by adding the shuffle operator\footnote{In the examples we assume that the shuffle has lower precedence than the concatenation and union operator.} $\re_0\shuffleop\re_1$ whose semantics is defined as follows by extending \cref{def:langOp} and \cref{def:sem}:
\[
 \begin{array}{l}
  \emptyWord \shuffleop \word = \word \shuffleop \emptyWord = \{\word\}                                                                                                      \qquad
  (\sym_1\word_1) \shuffleop (\sym_2\word_2) = \{ \sym_1  \} \strcatop (\word_1 \strshuffleop (\sym_2\word_2)) \cup \{\sym_2\}\strcatop((\sym_1\word_1)\strshuffleop\word_2) \\
  \lang_1\strshuffleop\lang_2=\bigcup_{\word_1\in\lang_1,\word_2\in\lang_2}(\word_1\strshuffleop\word_2)                                                                     \qquad
  \sem{\re_0\shuffleop\re_1}=\sem{\re_0}\strshuffleop\sem{\re_1}
 \end{array}
\]
We denote with $\reSetExt$ the set of regular expressions extended with the shuffle operator.

Although it is well-known that the shuffle operator does not increase the abstract expressive power of regular expressions, in practice it is still very useful in RV to write much more compact and clear
specifications when correct system behaviors can be described as independently interleaved event traces. Indeed, it has been proved \cite{BrodaEtAl18,MayerElAl94} that there exists a family of regular expressions $\re$ with shuffle for which the equivalent NFA needs at least $2^{\tsize{\re}}$ states.

Let us consider for instance the distinct events $o_n$, $a_n$, and $c_n$, with the meaning
``file $n$ has been opened'', ``accessed'', and ``closed'', respectively, where $n=1,2$ is the corresponding file descriptor.
If the SUS is allowed to manage the two files independently, then a specification for the correct use of them can be defined quite concisely by the regular expression $o_1\catop\starop{a_1}\catop c_1\shuffleop o_2\catop\starop{a_2}\catop c_2$.

If, for simplicity, we assume that files can be accessed only once, and we generalize over the number of files, then the specification becomes the regular expression $o_1\catop a_1\catop c_1\shuffleop\ldots\shuffleop o_n\catop a_n \catop c_n$ over the alphabet of $3n$ distinct symbols $\{o_1,a_1,c_1,\dots, o_n,a_n,c_n\}$. For such an expression, an equivalent NFA must have at least
$4^n$ states.

\begin{figure}[h]
 $$
  \begin{array}{c}
   \Rule{shf}{\re_0\der{\sym}\re'_0\quad\re_1\der{\sym}\re'_1}{\re_0\shuffleop\re_1\der{\sym}(\re'_0\shuffleop\re_1)\orop(\re_0\shuffleop\re'_1)}{}\qquad \hasEps{\re_0\shuffleop\re_1}=\hasEps{\re_0}\conj\hasEps{\re_1} \\[4ex]
   \Rule{l-shf}{\re_0\parDer{\sym}\re'_0}{\re_0\shuffleop\re_1\parDer{\sym}\re'_0\shuffleop\re_1}{} \qquad
   \Rule{r-shf}{\re_1\parDer{\sym}\re'_1}{\re_0\shuffleop\re_1\parDer{\sym}\re_0\shuffleop\re'_1}{}
  \end{array}
 $$
 \caption{Transition rules defining the derivative and the partial derivatives for the shuffle operator}
 \label{fig:shfParDer}
\end{figure}

The transition rules defining the derivative and the partial derivatives for the shuffle operator
can be found in \cref{fig:shfParDer}. They all correspond to the intuition that events can be interleaved; moreover, the empty trace is contained in $\re_0\shuffleop\re_1$ iff it is contained in $\re_0$ and $\re_1$.

As happens for the other operators, also with the shuffle  the derivative is always defined, while this is not the case for partial derivatives. For instance, if we assume $\sym_2\neq\sym_0$ and $\sym_2\neq\sym_1$, then
$\sym_0\shuffleop \sym_1\der{\sym_2}(\none\shuffleop\sym_1)\orop(\sym_0\shuffleop\none)$, and $\hasEps{(\none\shuffleop\sym_1)\orop(\sym_0\shuffleop\none)}=\none$, but there exists no $\re$ s.t.
$\sym_0\shuffleop \sym_1\parDer{\sym_2}\re$.

Theorems~\ref{theo:der} and \ref{theo:parDer} still hold along with \cref{cor:parDer}, when $\reSetExt$ is considered.

\subsection{Height of partial derivatives with the shuffle operator}

We extend the result of \cref{theo:gen-inc-bound} and \cref{cor:gen-inc-bound} to the case of the shuffle operators.

Unfortunately the proofs are more challenging because the claim of \cref{lemma:zero-inc} no longer holds: There exist partial derivatives $\re$ \st~$\incFun{\re}>0$.
As a counter example, let us consider the following reduction steps computing the partial derivatives of $(\eps\shuffleop\starop{\sym})\catop(\asym\shuffleop\starop{\sym})$ \wrt~ $\sym\asym\sym$:
\begin{flushleft}
 $
  \begin{array}{l}
   \re_0\parDer{\sym}\re_1\parDer{\asym}\re_2\parDer{\sym}\re_3 \\
   \re_0=(\eps\shuffleop\starop{\sym})\catop(\asym\shuffleop\starop{\sym}) \qquad
   \re_1=(\eps\shuffleop\eps\catop\starop{\sym})\catop(\asym\shuffleop\starop{\sym})                                     \qquad
   \re_2=\eps\shuffleop\starop{\sym}                                       \qquad \re_3=\eps\shuffleop\eps\catop\starop{\sym}
  \end{array}
 $
\end{flushleft}
We have $\theight{\re_1}=\theight{\re_0}+1$ (first reduction step), but also $\theight{\re_3}=\theight{\re_2}+1$ (third reduction step). Therefore, necessarily $\incFun{\re_2}>0$, to ensure that \cref{invariant} holds.

To prove the extended versions of the results of \cref{sec:parDer} we first extend the definition of $\incSym$ given in \cref{fig:incFun} with the case for the shuffle.
\[
 \incFun{\re_0\shuffleop\re_1}=\max(\geqFun{\re_0}{\re_1}\cdot\incFun{\re_0},\geqFun{\re_1}{\re_0}\cdot\incFun{\re_1})
\]
Before explaining the definition of $\incSym$ above, we note that the claim of \cref{lemma:bounds} holds also for $\reSetExt$ and can still be proved by induction on the definition of $\incSym$: $0\leq\incFun{\re}\leq 1$.
Consequently, if the height of one sub-expression $\re_i$ is strictly greater than the other $\re_{1-i}$, then only the partial derivative of $\re_i$ can contribute to the increment of the height of the partial derivative of $\re_0\shuffleop\re_1$, because increments are always bounded by 1.
Note that $\max(\incFun{\re_i},0)=\incFun{\re_i}$, since $\incFun{\re_i}\geq 0$, therefore $\incFun{\re_0\shuffleop\re_1}=\incFun{\re_i}$, if $\theight{\re_i}>\theight{\re_{1-i}}$ for $i=0,1$.

If $\theight{\re_0}=\theight{\re_1}$, then both the derivatives of $\re_0$ and $\re_1$ can contribute to the increment of the height of the partial derivative of $\re_0\shuffleop\re_1$.
Since \rn{l-shf} or \rn{r-shf} can be non-deterministically applied, the maximum increment
$\max(\incFun{\re_0},\incFun{\re_1})$ needs to be considered.

The correctness of the definition of $\incFun{\re_0\shuffleop\re_1}$ is still ensured by the claim of \cref{theo:inc-bound} extended to $\reSetExt$.

\subsubsection*{Proof of \cref{theo:inc-bound} extended to $\reSetExt$
}
For all $\re,\re'\in\reSetExt$, $\sym\in\symAlph$, if $\re\parDer{\sym}\re'$, then $\theight{\re'}\leq \theight{\re}+\incFun{\re}$.

\begin{proof}
 The structure of the proof is the same, but the two new rules for the shuffle operator need to be considered.
 \begin{itemize}
  \item %$\Rule{l-shf}{\re_0\parDer{\sym}\re'_0}{\re_0\shuffleop\re_1\parDer{\sym}\re'_0\shuffleop\re_1}{}$\\[2ex]
        rule \rn{l-shf}: By inductive hypothesis $\theight{\re_0'}\leq \theight{\re_0}+\incFun{\re_0}$.
        We distinguish two cases:
        \begin{itemize}
         \item $\theight{\re_0}\geq\theight{\re_1}$:
               $\theight{\re_0'\shuffleop\re_1}\stackrel{\mathrm{def}}{=}\max(\theight{\re'_0},\theight{\re_1})+1\leq\max(\theight{\re_0}+\incFun{\re_0},\theight{\re_1})+1\leq\max(\theight{\re_0},\theight{\re_1})+1+\incFun{\re_0}\leq\max(\theight{\re_0},\theight{\re_1})+1+\incFun{\re_0\shuffleop\re_1}=\theight{\re_0\shuffleop\re_1}+\incFun{\re_0\shuffleop\re_1}$, where inequalities are derived from the inductive hypothesis, the definition of $\max$, $\theight{\ }$, $\incSym$, and $\geqSym$, the assumption $\theight{\re_0}\geq\theight{\re_1}$, and \cref{lemma:bounds}.

         \item $\theight{\re_0}<\theight{\re_1}$ (that is, $\theight{\re_0}+\incFun{\re_0}\leq\theight{\re_1}$ by \cref{lemma:bounds}):
               $\theight{\re_0'\shuffleop\re_1}\stackrel{\mathrm{def}}{=}\max(\theight{\re'_0},\theight{\re_1})+1\leq\max(\theight{\re_0}+\incFun{\re_0},\theight{\re_1})+1=\theight{\re_0\shuffleop\re_1}\leq\theight{\re_0\shuffleop\re_1}+\incFun{\re_0\shuffleop\re_1}$, where inequalities are derived from the inductive hypothesis, the definition of $\max$, $\theight{\ }$, $\incSym$, and $\geqSym$, the assumption $\theight{\re_0}<\theight{\re_1}$, and \cref{lemma:bounds}.
        \end{itemize}
  \item rule (r-shf) is symmetric to rule (l-shf).
 \end{itemize}
\end{proof}

Since both \cref{lemma:bounds} and \cref{theo:inc-bound} hold for $\reSetExt$, \cref{cor:bound} holds for $\reSetExt$ as well.

To prove for $\reSetExt$ the invariant defined by~\cref{invariant}, we modularize the proof by introducing two lemmas.

The first lemma is a weaker version of \cref{lemma:zero-inc}.

\begin{lemma}\label[lemma]{lemma:ext-zero-inc}
 For all $\re,\re'\in\reSetExt$, and $\sym\in\symAlph$, if $\re\parDer{\sym}\re'$ and $\theight{\re'}=\theight{\re}+1$, then $\incFun{\re'}=0$.
\end{lemma}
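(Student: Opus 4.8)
The plan is to proceed by induction and case analysis on the rules defining $\re\parDer{\sym}\re'$, mirroring the structure of the proof of \cref{lemma:zero-inc}, but now carrying along the extra hypothesis $\theight{\re'}=\theight{\re}+1$ and the new shuffle rules. The base case \rn{sym} is vacuous since $\theight{\eps}=0=\theight{\sym}$, so the hypothesis $\theight{\re'}=\theight{\re}+1$ never holds there. For the inductive step, the key observation I would exploit repeatedly is \cref{cor:bound} (extended to $\reSetExt$), which says every one-step partial derivative increases the height by at most one: so whenever $\theight{\re'}=\theight{\re}+1$, the derivation really is ``maximally height-increasing'', and this forces the sub-derivations to be maximally increasing too, which is exactly what lets me invoke the inductive hypothesis.

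For the rules inherited from \cref{fig:parDer}, I would argue as follows. For \rn{l-or} with $\re=\re_0\orop\re_1$ and $\re'=\re_0'$: since $\theight{\re_0'}=\theight{\re'}=\theight{\re}+1=\max(\theight{\re_0},\theight{\re_1})+2 > \theight{\re_0}+1$, this contradicts \cref{cor:bound} applied to $\re_0\parDer{\sym}\re_0'$, so this case cannot arise; \rn{r-or} is symmetric, and \rn{r-cat} is the same since there $\theight{\re'_1}\le\theight{\re_1}+1\le\theight{\re}$, again contradicting the hypothesis. For \rn{star} with $\re=\starop{\re_0}$, $\re'=\re_0'\catop\starop{\re_0}$: here $\incFun{\re'}=\geqFun{\re_0'}{\starop{\re_0}}\cdot\incFun{\re_0'}$, and since $\theight{\starop{\re_0}}=\theight{\re_0}+1 > \theight{\re_0}\ge\theight{\re_0'}$ (the last inequality because... actually $\theight{\re_0'}\le\theight{\re_0}+1=\theight{\starop{\re_0}}$, but I need strict inequality $\theight{\re_0'}<\theight{\starop{\re_0}}$ to make $\geqSym$ vanish) — I would check that $\theight{\re'}=\theight{\re}+1$ forces $\theight{\re_0'}=\theight{\re_0}+1=\theight{\starop{\re_0}}$, hence $\geqFun{\re_0'}{\starop{\re_0}}=1$ and so $\incFun{\re'}=\incFun{\re_0'}$, which is $0$ by the inductive hypothesis applied to $\re_0\parDer{\sym}\re_0'$ (whose height increase is $1$). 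For \rn{l-cat} with $\re=\re_0\catop\re_1$, $\re'=\re_0'\catop\re_1$: the hypothesis $\theight{\re'}=\theight{\re}+1$ together with \cref{cor:bound} forces $\theight{\re_0'}=\theight{\re_0}+1$ and $\theight{\re_0}\ge\theight{\re_1}$; then $\incFun{\re'}=\geqFun{\re_0'}{\re_1}\cdot\incFun{\re_0'}=1\cdot\incFun{\re_0'}=0$ by the inductive hypothesis on $\re_0\parDer{\sym}\re_0'$.

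The genuinely new work is in the two shuffle rules. For \rn{l-shf} with $\re=\re_0\shuffleop\re_1$, $\re'=\re_0'\shuffleop\re_1$: from $\theight{\re'}=\theight{\re}+1$ and \cref{cor:bound} I would deduce $\theight{\re_0'}=\theight{\re_0}+1$ and $\theight{\re_0}\ge\theight{\re_1}$, hence also $\theight{\re_0'}=\theight{\re_0}+1\ge\theight{\re_1}+1>\theight{\re_1}$, so $\geqFun{\re_0'}{\re_1}=1$ and $\geqFun{\re_1}{\re_0'}=0$; therefore $\incFun{\re'}=\incFun{\re_0'\shuffleop\re_1}=\max(\incFun{\re_0'},0)=\incFun{\re_0'}$, which is $0$ by the inductive hypothesis applied to $\re_0\parDer{\sym}\re_0'$ since its height increase equals $1$. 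Rule \rn{r-shf} is symmetric. I expect the main obstacle to be bookkeeping the several arithmetic inequalities among $\theight{\re_0}$, $\theight{\re_1}$, $\theight{\re_0'}$ so as to pin down the value of $\geqSym$ precisely in each case — in particular making sure that $\theight{\re'}=\theight{\re}+1$ really does force both ``the active sub-derivative is maximally increasing'' and ``the active side is the taller (or tied-taller) one'', which is what collapses $\incFun{\re'}$ to $\incFun{\re_i'}=0$.
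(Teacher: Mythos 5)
Your overall strategy coincides with the paper's: induction on the derivation, using \cref{cor:bound} to show that the hypothesis $\theight{\re'}=\theight{\re}+1$ forces the active sub-derivation to be maximally height-increasing, dismissing \rn{r-cat}, \rn{l-or}, \rn{r-or} as vacuous, and killing the second component of the shuffle max via $\geqFun{\re_1}{\re_0'}=0$. The base case and the cases \rn{l-cat}, \rn{r-cat}, \rn{l-or}, \rn{r-or}, \rn{l-shf}, \rn{r-shf} are all handled exactly as in the paper and are correct.

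The one genuine flaw is in the \rn{star} case. You claim that $\theight{\re'}=\theight{\re}+1$ \emph{forces} $\theight{\re_0'}=\theight{\re_0}+1=\theight{\starop{\re_0}}$, and you hang the whole case on that. This is false: the hypothesis only yields $\max(\theight{\re_0'},\theight{\starop{\re_0}})=\theight{\starop{\re_0}}$, i.e.\ $\theight{\re_0'}\leq\theight{\starop{\re_0}}$, which by \cref{cor:bound} is \emph{always} true — so the hypothesis imposes no constraint at all on $\theight{\re_0'}$ here. Concretely, $\starop{\sym}\parDer{\sym}\eps\catop\starop{\sym}$ satisfies $\theight{\eps\catop\starop{\sym}}=2=\theight{\starop{\sym}}+1$, yet $\theight{\eps}=0<1=\theight{\sym}+1$, so the sub-derivation is not maximally increasing and your appeal to the inductive hypothesis is unavailable. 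The repair is the two-way split the paper performs: if $\theight{\re_0'}<\theight{\starop{\re_0}}$ then $\geqFun{\re_0'}{\starop{\re_0}}=0$ and $\incFun{\re_0'\catop\starop{\re_0}}=0$ with no induction needed; otherwise $\theight{\re_0'}=\theight{\starop{\re_0}}=\theight{\re_0}+1$, the inductive hypothesis gives $\incFun{\re_0'}=0$, and $\incFun{\re_0'\catop\starop{\re_0}}=1\cdot 0=0$. With that case split inserted, your proof is complete and matches the paper's.
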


\begin{proof}
 By induction and case analysis on the rules defining $\re\parDer{\sym}\re'$.
 \begin{description}
  \item[base case:] The only base rule is \rn{sym}.
   The case is vacuous because $\re=\sym$, $\re'=\eps$, and $\theight{\sym}=\theight{\eps}=0$.
  \item[inductive step:] \hspace*{\fill}
   \begin{itemize}
    \item $\Rule{l-cat}{\re_0\parDer{\sym}\re'_0}{\re_0\catop\re_1\parDer{\sym}\re'_0\catop\re_1}{}$\\[2ex]
          From the hypothesis and the definition of $\theight{\ }$
          \begin{equation}
           \label{eq:l-cat-one}
           \max(\theight{\re_0'},\theight{\re_1})=\max(\theight{\re_0},\theight{\re_1})+1
          \end{equation}
          Therefore $\theight{\re_1}\leq\max(\theight{\re_0},\theight{\re_1})<\max(\theight{\re_0},\theight{\re_1})+1=\max(\theight{\re_0'},\theight{\re_1})$ by the definition of $\max$ and \cref{eq:l-cat-one}.

          Therefore, by the definition of $\max$
          \begin{equation}
           \label{eq:l-cat-two}
           \max(\theight{\re_0'},\theight{\re_1})=\theight{\re_0'}
          \end{equation}
          Hence $\theight{\re_0}+1\leq\max(\theight{\re_0},\theight{\re_1})+1=\theight{\re'_0}$ by the definition of $\max$, \cref{eq:l-cat-one} and \cref{eq:l-cat-two}.

          Moreover, by \cref{cor:bound} $\theight{\re_0'}\leq\theight{\re_0}+1$, therefore
          $\theight{\re_0'}=\theight{\re_0}+1$, and by inductive hypothesis $\incFun{\re'_0}=0$, which implies $\incFun{\re'_0\catop\re_1}=0$ by definition of $\incSym$.

    \item $\Rule{r-cat}{\re_1\parDer{\sym}\re'_1}{\re_0\catop\re_1\parDer{\sym}\re'_1}{\hasEps{\re_0}=\eps}$\\[2ex]
          This case is vacuous because $\theight{\re_1'}\leq\theight{\re_1}+1\leq\max(\theight{\re_0},\theight{\re_1})+1=\theight{\re_0\catop\re_1}$ by \cref{cor:bound} and the definition of $\max$ and $\theight{\ }$.

    \item The case for rules \rn{l-or} and \rn{r-or} is vacuous for the same reason shown for \rn{r-cat}.
    \item $\Rule{l-shf}{\re_0\parDer{\sym}\re'_0}{\re_0\shuffleop\re_1\parDer{\sym}\re'_0\shuffleop\re_1}{}$\\[2ex]
          The same proof for \rn{l-cat} shows that $\incFun{\re_0'}=\incFun{\re_0}+1$ and $\theight{\re_1}<\theight{\re'_0}$, therefore $\incFun{\re'_0}=0$ by inductive hypothesis, and $\incFun{\re'_0\shuffleop\re_1}=\max(1\cdot\incFun{\re_0'},0\cdot\incFun{\re_1})=0$ by $\theight{\re_1}<\theight{\re'_0}$ and the definition of $\max$, $\incSym$ and $\geqSym$.
    \item rule \rn{r-shf} is symmetric to rule \rn{l-shf}.
    \item $\Rule{star}{\re\parDer{\sym}\re'}{\starop{\re}\parDer{\sym}\re'\catop\starop{\re}}{}$\\[2ex]
          By definition of $\theight{\ }$ and by the hypothesis $\theight{\re'\catop\starop{\re}}=\theight{\starop{\re}}+1$ we have
          $\max(\theight{\re'},\theight{\starop{\re}})+1=\theight{\starop{\re}}+1$, therefore
          $\max(\theight{\re'},\theight{\starop{\re}})=\theight{\starop{\re}}$.

          If $\theight{\re'}<\theight{\starop{\re}}$ then by the definition of $\incSym$ and $\geqSym$ we have $\incFun{\re'\catop\starop{\re}}=\geqFun{\re'}{\starop{\re}}\cdot\incFun{\re'}=0\cdot\incFun{\re'}=0$.

          If $\theight{\re'}\geq\theight{\starop{\re}}$ then $\theight{\re'}=\theight{\starop{\re}}$, since $\max(\theight{\re'},\theight{\starop{\re}})=\theight{\starop{\re}}$.
          By the definition of $\theight{\ }$ we have $\theight{\re'}=\theight{\re}+1$, therefore
          by inductive hypothesis $\incFun{\re'}=0$ and by definition of $\incSym$ and $\geqSym$ we have $\incFun{\re'\catop\starop{\re}}=\geqFun{\re'}{\starop{\re}}\cdot\incFun{\re'}=1\cdot\incFun{\re'}=0$.
   \end{itemize}
 \end{description}
\end{proof}

The second lemma establishes an invariant on $\incSym$ when the height of the derivatives does not change.

\begin{lemma}\label[lemma]{lemma:ext-leq-inc}
 For all $\re,\re'\in\reSetExt$, and $\sym\in\symAlph$, if $\re\parDer{\sym}\re'$ and $\theight{\re'}=\theight{\re}$, then $\incFun{\re'}\leq\incFun{\re}$.
\end{lemma}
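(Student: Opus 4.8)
The plan is to proceed by induction and case analysis on the rule applied at the root of the derivation of $\re\parDer{\sym}\re'$, exactly mirroring the structure of the proof of \cref{lemma:zero-inc} and of the extended \cref{lemma:ext-zero-inc}. The base case \rn{sym} is again vacuous, since $\theight{\sym}=\theight{\eps}=0$ gives $\re'=\eps$ with $\incFun{\eps}=0\leq 0=\incFun{\sym}$ — in fact the hypothesis $\theight{\re'}=\theight{\re}$ here holds trivially, so one just checks the conclusion directly. For the inductive step, the rules \rn{l-or}, \rn{r-or} and \rn{r-cat} are easy: in those cases $\incFun{\re'}=0$ by the definition of $\incSym$ (or directly by inductive hypothesis for \rn{r-cat}, since $\incFun{\re_1'}$ is then bounded appropriately and $\incFun{\re_0\catop\re_1}\geq 0$), so the inequality $\incFun{\re'}\leq\incFun{\re}$ follows from $\incFun{\re}\geq 0$ (\cref{lemma:bounds}).

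The interesting cases are \rn{l-cat}, \rn{star}, \rn{l-shf} and \rn{r-shf}. For \rn{l-cat} with $\re=\re_0\catop\re_1\parDer{\sym}\re_0'\catop\re_1=\re'$, the hypothesis $\theight{\re'}=\theight{\re}$ gives $\max(\theight{\re_0'},\theight{\re_1})=\max(\theight{\re_0},\theight{\re_1})$; I would split on whether $\theight{\re_0'}=\theight{\re_0}$ or $\theight{\re_0'}<\theight{\re_0}$ (the case $\theight{\re_0'}>\theight{\re_0}$ is impossible once $\theight{\re_0'}\le\theight{\re_0}+1$ and the max is preserved, unless $\theight{\re_0}<\theight{\re_1}$, which must be handled). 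In the subcase where $\theight{\re_0'}=\theight{\re_0}$ I apply the inductive hypothesis to get $\incFun{\re_0'}\leq\incFun{\re_0}$ and then track how $\geqSym$ behaves: since $\theight{\re_0'}=\theight{\re_0}$, we have $\geqFun{\re_0'}{\re_1}=\geqFun{\re_0}{\re_1}$, hence $\incFun{\re'}=\geqFun{\re_0}{\re_1}\cdot\incFun{\re_0'}\leq\geqFun{\re_0}{\re_1}\cdot\incFun{\re_0}=\incFun{\re}$. In the subcase $\theight{\re_0'}<\theight{\re_0}$, the height strictly decreased, so I cannot use the inductive hypothesis of this lemma directly; instead I observe $\incFun{\re'}=\geqFun{\re_0'}{\re_1}\cdot\incFun{\re_0'}\le\incFun{\re_0'}\le 1$ and must argue $\incFun{\re}\ge$ that — here I would invoke that a strict height decrease in one step forces the relevant $\geqSym$ factor or use \cref{lemma:zero-inc} on $\re_0\parDer{\sym}\re_0'$ to conclude $\incFun{\re_0'}=0$, so the bound is trivial.

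The \rn{star} case follows the template already used in \cref{lemma:ext-zero-inc}: from $\theight{\re'\catop\starop\re}=\theight{\starop\re}$ deduce $\theight{\re'}\leq\theight{\starop\re}$, split on $\theight{\re'}<\theight{\starop\re}$ (then $\geqFun{\re'}{\starop\re}=0$ and $\incFun{\re'\catop\starop\re}=0$) versus $\theight{\re'}=\theight{\starop\re}=\theight{\re}+1$ (then by \cref{lemma:ext-zero-inc} $\incFun{\re'}=0$, so $\incFun{\re'\catop\starop\re}=\geqFun{\re'}{\starop\re}\cdot\incFun{\re'}=0$), and in both cases $0\leq 1=\incFun{\starop\re}$. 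For \rn{l-shf} with $\re_0\shuffleop\re_1\parDer{\sym}\re_0'\shuffleop\re_1$, from $\theight{\re'}=\theight{\re}$ I get $\max(\theight{\re_0'},\theight{\re_1})=\max(\theight{\re_0},\theight{\re_1})$, and the main obstacle — which I expect to be this case — is that I must compare $\incFun{\re_0'\shuffleop\re_1}=\max(\geqFun{\re_0'}{\re_1}\incFun{\re_0'},\geqFun{\re_1}{\re_0'}\incFun{\re_1})$ against $\incFun{\re_0\shuffleop\re_1}=\max(\geqFun{\re_0}{\re_1}\incFun{\re_0},\geqFun{\re_1}{\re_0}\incFun{\re_1})$ term by term. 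I would split on $\theight{\re_0'}$ versus $\theight{\re_0}$: if they are equal, the $\geqSym$ values are unchanged and the inductive hypothesis $\incFun{\re_0'}\le\incFun{\re_0}$ closes it; if $\theight{\re_0'}<\theight{\re_0}$, then \cref{lemma:zero-inc} forces $\incFun{\re_0'}=0$, reducing the left max to $\geqFun{\re_1}{\re_0'}\incFun{\re_1}\le\incFun{\re_1}$, and since the overall height did not change one still has enough to bound this by $\incFun{\re_0\shuffleop\re_1}$ (using that $\theight{\re_1}$ then attains the max on both sides, so $\geqFun{\re_1}{\re_0}=1$). Rule \rn{r-shf} is symmetric. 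The one delicate point throughout is keeping careful track of how $\geqSym$ can flip when a subexpression's height changes by exactly one, which is why splitting each nontrivial case on the height of the rewritten subexpression is the right organizing principle.
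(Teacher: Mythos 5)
Your overall strategy (induction on the derivation, splitting each non-trivial case on how the height of the rewritten subexpression changes, and tracking how $\geqSym$ can flip) is the same as the paper's, but two of your justifications are genuinely broken. First, you repeatedly invoke \cref{lemma:zero-inc} to conclude $\incFun{\re_0'}=0$; that lemma is proved only for $\reSet$ and is exactly the property that \emph{fails} in $\reSetExt$ --- the paper exhibits the counterexample $\re_2=\eps\shuffleop\starop{\sym}$ with $\incFun{\re_2}>0$, and the whole reason the present section introduces the weaker \cref{lemma:ext-zero-inc} (which additionally requires $\theight{\re'}=\theight{\re}+1$) is that the unconditional version is false with shuffle. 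In the subcases where you call on it ($\theight{\re_0'}<\theight{\re_0}$ under \rn{l-cat} and \rn{l-shf}), the correct argument --- which you mention only as an alternative --- is that preservation of the outer $\max$ forces $\theight{\re_1}=\theight{\re_0}>\theight{\re_0'}$, so $\geqFun{\re_0'}{\re_1}=0$ and the term containing $\incFun{\re_0'}$ is annihilated regardless of its value; for \rn{l-shf} one additionally notes $\geqFun{\re_1}{\re_0'}=\geqFun{\re_1}{\re_0}=1$, so both sides reduce to a comparison against $\incFun{\re_1}$.

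Second, your dismissal of \rn{r-cat}, \rn{l-or}, \rn{r-or} as easy because ``$\incFun{\re'}=0$ by the definition of $\incSym$'' does not hold up: in those rules $\re'$ is the partial derivative $\re_i'$ of a subexpression, not an $\orop$-expression, so its $\incSym$ value is not zero by definition, and the inductive hypothesis of the present lemma is inapplicable because the assumption $\theight{\re'}=\theight{\re}$ forces $\theight{\re_i'}=\max(\theight{\re_0},\theight{\re_1})+1\geq\theight{\re_i}+1$, i.e.\ the subexpression's height strictly increases. The paper closes these cases by combining \cref{cor:bound} with \cref{lemma:ext-zero-inc} to obtain $\incFun{\re_i'}=0$. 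The same missing ingredient reappears in the \rn{l-cat}/\rn{l-shf} subcase $\theight{\re_0}<\theight{\re_1}$ with $\theight{\re_0'}=\theight{\re_1}$, which you flag as ``must be handled'' but never handle: there $\geqFun{\re_0'}{\re_1}=1$, so you genuinely need $\incFun{\re_0'}=0$, obtainable only via \cref{lemma:ext-zero-inc} applied to $\theight{\re_0'}=\theight{\re_0}+1$. Your \rn{star} case and the equal-height subcases (where the inductive hypothesis and the invariance of $\geqSym$ do the work) are correct and match the paper.
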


\begin{proof}
 By induction and case analysis on the rules defining $\re\parDer{\sym}\re'$.
 \begin{description}
  \item[base case:] The only base rule is \rn{sym}.
   In this case $\re=\sym$, $\re'=\eps$, therefore $\incFun{\eps}=0=\incFun{\sym}$.
  \item[inductive step:] \hspace*{\fill}
   \begin{itemize}
    \item $\Rule{l-cat}{\re_0\parDer{\sym}\re'_0}{\re_0\catop\re_1\parDer{\sym}\re'_0\catop\re_1}{}$\\[2ex]
          From the hypothesis and the definition of $\theight{\ }$
          \begin{equation}
           \label{eq:l-cat-three}
           \max(\theight{\re_0'},\theight{\re_1})=\max(\theight{\re_0},\theight{\re_1})
          \end{equation}
          Two different cases may occur:
          \begin{itemize}
           \item $\theight{\re_0}\geq\theight{\re_1}$\\
                 From \cref{eq:l-cat-three} and the definition of $\max$, $\max(\theight{\re_0'},\theight{\re_1})=\theight{\re_0}$, hence $\theight{\re_0'}\leq\theight{\re_0}$ by the definition of $\max$.

                 If $\theight{\re_0'}<\theight{\re_0}$, then $\theight{\re_0'}<\theight{\re_1}$ by \cref{eq:l-cat-three} and the definition of $\max$. Therefore $\incFun{\re_0'\catop\re_1}=0\leq\incFun{\re_0\catop\re_1}$ by definition of $\incSym$ and \cref{lemma:bounds}.

                 Therefore $\theight{\re_1}\leq\max(\theight{\re_0},\theight{\re_1})<\max(\theight{\re_0},\theight{\re_1})+1=\max(\theight{\re_0'},\theight{\re_1})$ by the definition of $\max$ and \cref{eq:l-cat-one}.

                 If $\theight{\re_0'}=\theight{\re_0}$, then $\incFun{\re_0'}\leq\incFun{\re_0}$ by inductive hypothesis. Since $\theight{\re_0'}=\theight{\re_0}\geq\theight{\re_1}$, we have $\incFun{\re_0'\catop\re_1}=\incFun{\re_0'}\leq\incFun{\re_0}=\incFun{\re_0\catop\re_1}$ by the definition of $\incSym$ and $\geqSym$.
           \item $\theight{\re_0}<\theight{\re_1}$\\
                 In this case $\incFun{\re_0\catop\re_1}=0$ by the definition of $\incSym$ and $\geqSym$. Moreover, from \cref{eq:l-cat-three} $\max(\theight{\re_0'},\theight{\re_1})=\theight{\re_1}$, hence $\theight{\re_0'}\leq\theight{\re_1}$ by the definition of $\max$.

                 If $\theight{\re_0'}<\theight{\re_1}$ then $\incFun{\re_0'\catop\re_1}=0=\incFun{\re_0\catop\re_1}$ by the definition of $\incSym$ and $\geqSym$. If $\theight{\re_0'}=\theight{\re_1}$ then $\theight{\re_0}<\theight{\re_1}=\theight{\re_0'}$, therefore $\theight{\re_0}+1\leq\theight{\re_0'}$. Moreover, by \cref{cor:bound} $\theight{\re'_0}\leq\theight{\re_0}+1$, hence $\theight{\re_0'}=\theight{\re_0}+1$, and, by \cref{lemma:ext-zero-inc}, $\incFun{\re_0'}=0$. Finally,
                 $\incFun{\re_0'\catop\re_1}=\incFun{\re_0'}=0=\incFun{\re_0\catop\re_1}$ by the definition of $\incSym$ and $\geqSym$.
          \end{itemize}

    \item $\Rule{r-cat}{\re_1\parDer{\sym}\re'_1}{\re_0\catop\re_1\parDer{\sym}\re'_1}{\hasEps{\re_0}=\eps}$\\[2ex]
          From the hypothesis and the definition of $\theight{\ }$
          \begin{equation}
           \label{eq:l-cat}
           \theight{\re_1'}=\max(\theight{\re_0},\theight{\re_1})+1
          \end{equation}
          Two different cases may occur:
          \begin{itemize}
           \item $\theight{\re_0}\geq\theight{\re_1}$\\
                 From \cref{eq:l-cat} and the definition of $\max$, $\theight{\re_1'}=\max(\theight{\re_0},\theight{\re_1})+1\geq\theight{\re_1}+1$. Moreover, by \cref{lemma:bounds},
                 $\theight{\re_1'}\leq\theight{\re}+1$, therefore $\theight{\re_1'}=\theight{\re}+1$, hence by \cref{lemma:ext-zero-inc} $\incFun{\re_1'}=0$.
                 Finally, $\incFun{\re_1'}=0\leq\incFun{\re_0\catop\re_1}$ by \cref{lemma:bounds}.

           \item $\theight{\re_0}<\theight{\re_1}$\\
                 In this case $\incFun{\re_0\catop\re_1}=0$ by the definition of $\incSym$ and $\geqSym$. Moreover, from \cref{eq:l-cat} and the definition of $\max$, $\theight{\re_1'}=\theight{\re_1}+1$, hence  by \cref{lemma:ext-zero-inc}, $\incFun{\re_1'}=0$. Finally,
                 $\incFun{\re_1'}=0\leq\incFun{\re_0\catop\re_1}$ by \cref{lemma:bounds}.
          \end{itemize}

    \item The proofs for the rules \rn{l-or} and \rn{r-or} are analogous to the proof for the rule \rn{r-cat}.
    \item $\Rule{l-shf}{\re_0\parDer{\sym}\re'_0}{\re_0\shuffleop\re_1\parDer{\sym}\re'_0\shuffleop\re_1}{}$\\[2ex]
          The proof is similar to that for \rn{l-cat}, except from some details on the definition of $\incFun{\re_0\shuffleop\re_1}$, which differs from that of $\incFun{\re_0\catop\re_1}$. We report it for the sake of completeness.

          From the hypothesis and the definition of $\theight{\ }$
          \begin{equation}
           \label{eq:l-shuffle}
           \max(\theight{\re_0'},\theight{\re_1})=\max(\theight{\re_0},\theight{\re_1})
          \end{equation}
          Two different cases may occur:
          \begin{itemize}
           \item $\theight{\re_0}\geq\theight{\re_1}$\\
                 From \cref{eq:l-shuffle} and the definition of $\max$, $\max(\theight{\re_0'},\theight{\re_1})=\theight{\re_0}$, hence $\theight{\re_0'}\leq\theight{\re_0}$ by the definition of $\max$.

                 If $\theight{\re_0'}<\theight{\re_0}$, then $\theight{\re_0'}<\theight{\re_1}=\theight{\re_0}$ by \cref{eq:l-shuffle} and the definition of $\max$. Therefore $\geqFun{\re_0'}{\re_1}= 0$ and $\geqFun{\re_1}{\re_0'}=\geqFun{\re_0}{\re_1}=\geqFun{\re_1}{\re_0}=1$ by the definition of $\geqSym$, and
                 $\incFun{\re_0'\shuffleop\re_1}=\incFun{\re_1}\leq\max(\incFun{\re_0},\incFun{\re_1})=\incFun{\re_0\shuffleop\re_1}$ by the definition of $\incSym$ and $\max$.

                 If $\theight{\re_0'}=\theight{\re_0}$, then $\incFun{\re_0'}\leq\incFun{\re_0}$ by inductive hypothesis. Moreover,
                 $\geqFun{\re_0}{\re_1}=\geqFun{\re'_0}{\re_1}=1$ and $\geqFun{\re_1}{\re_0}=\geqFun{\re_1}{\re'_0}$ by the definition of $\geqSym$.
                 Therefore, $\incFun{\re_0'\shuffleop\re_1}=\max(\incFun{\re'_0},\geqFun{\re_1}{\re_0'}\cdot\incFun{\re_1})\leq\max(\incFun{\re_0},\geqFun{\re_1}{\re_0}\cdot\incFun{\re_1})=\incFun{\re_0\shuffleop\re_1}$ by the definition of $\max$.
           \item $\theight{\re_0}<\theight{\re_1}$\\
                 In this case $\incFun{\re_0\shuffleop\re_1}=\incFun{\re_1}$ by the definition of $\incSym$ and $\geqSym$. Moreover, from \cref{eq:l-shuffle} $\max(\theight{\re_0'},\theight{\re_1})=\theight{\re_1}$, hence $\theight{\re_0'}\leq\theight{\re_1}$ by the definition of $\max$.

                 If $\theight{\re_0'}<\theight{\re_1}$ then $\incFun{\re_0'\shuffleop\re_1}=\incFun{\re_1}=\incFun{\re_0\shuffleop\re_1}$ by the definition of $\incSym$ and $\geqSym$. If $\theight{\re_0'}=\theight{\re_1}$ then $\theight{\re_0}<\theight{\re_1}=\theight{\re_0'}$, therefore $\theight{\re_0}+1\leq\theight{\re_0'}$. Moreover, by \cref{cor:bound} $\theight{\re'_0}\leq\theight{\re_0}+1$, hence $\theight{\re_0'}=\theight{\re_0}+1$, and, by \cref{lemma:ext-zero-inc}, $\incFun{\re_0'}=0$. Finally,
                 $\incFun{\re_0'\shuffleop\re_1}=\max(\incFun{\re_0'},\incFun{\re_1})=\max(0,\incFun{\re_1})=\incFun{\re_0\shuffleop\re_1}$ by the definition of $\incSym$ and $\geqSym$.
          \end{itemize}

    \item The rule \rn{r-shf} is symmetric to \rn{l-shf}.
    \item $\Rule{star}{\re\parDer{\sym}\re'}{\starop{\re}\parDer{\sym}\re'\catop\starop{\re}}{}$\\[2ex]
          Directly by the definition of $\incSym$ and \cref{lemma:bounds}
          $\incFun{\re'\catop\starop{\re}}\leq 1=\incFun{\starop{\re}}$.
   \end{itemize}
 \end{description}
\end{proof}

\begin{theorem}\label[theorem]{theo:inv}
 For all $\re,\re'\in\reSetExt$, $\sym\in\symAlph$, if $\re\parDer{\sym}\re'$, then $\theight{\re'}\leq\theight{\re}+\incFun{\re}-\incFun{\re'}$.
\end{theorem}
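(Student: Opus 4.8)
The plan is to reduce \cref{theo:inv} to a short three-way case analysis on how $\theight{\re'}$ compares to $\theight{\re}$, exploiting that all the combinatorial work has already been packaged into \cref{lemma:ext-zero-inc} and \cref{lemma:ext-leq-inc}, together with the extension of \cref{theo:inc-bound} to $\reSetExt$ and the bounds $0\leq\incFun{\re}\leq 1$ of \cref{lemma:bounds} (which still hold over $\reSetExt$). First I would invoke \cref{cor:bound}, valid for $\reSetExt$ precisely because both \cref{lemma:bounds} and \cref{theo:inc-bound} are, to conclude $\theight{\re'}\leq\theight{\re}+1$; hence only the three cases $\theight{\re'}<\theight{\re}$, $\theight{\re'}=\theight{\re}$, and $\theight{\re'}=\theight{\re}+1$ need be considered.

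In the first case $\theight{\re'}\leq\theight{\re}-1$ (heights being natural numbers), so from $\incFun{\re}\geq 0$ and $\incFun{\re'}\leq 1$ we get $\theight{\re}+\incFun{\re}-\incFun{\re'}\geq\theight{\re}-1\geq\theight{\re'}$. In the second case \cref{lemma:ext-leq-inc} gives $\incFun{\re'}\leq\incFun{\re}$, hence $\theight{\re}+\incFun{\re}-\incFun{\re'}\geq\theight{\re}=\theight{\re'}$. In the third case \cref{lemma:ext-zero-inc} gives $\incFun{\re'}=0$, and the extended \cref{theo:inc-bound} gives $\theight{\re'}\leq\theight{\re}+\incFun{\re}=\theight{\re}+\incFun{\re}-\incFun{\re'}$. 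In each case the required inequality holds, which closes the argument.

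The conceptual point is that, unlike the shuffle-free setting where \cref{cor:inv} followed at once from $\incFun{\re'}=0$ for every partial derivative $\re'$ (\cref{lemma:zero-inc}), here $\incFun{\re'}$ may be positive, so the invariant must be recovered by tracking separately what happens when the height strictly drops, stays fixed, or rises by exactly one. Consequently the genuinely delicate step is not \cref{theo:inv} itself but \cref{lemma:ext-leq-inc}: there the shuffle rule \rn{l-shf} (and its mirror \rn{r-shf}) forces a comparison of $\geqFun{\re_0}{\re_1}$ with $\geqFun{\re'_0}{\re_1}$ and of $\geqFun{\re_1}{\re_0}$ with $\geqFun{\re_1}{\re'_0}$ after a height-preserving step, appealing to \cref{lemma:ext-zero-inc} in exactly the sub-case where the sub-derivative's height increased. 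Given those lemmas, \cref{theo:inv} is a routine case split; and, exactly as \cref{theo:gen-inc-bound} was obtained from \cref{cor:inv}, it will extend by induction on $\word$ to arbitrary words, yielding the $\reSetExt$ analogue of \cref{cor:gen-inc-bound}.
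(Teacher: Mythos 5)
Your proposal is correct and follows essentially the same route as the paper's own proof: both first invoke \cref{cor:bound} (extended to $\reSetExt$) to restrict attention to the three cases $\theight{\re'}<\theight{\re}$, $\theight{\re'}=\theight{\re}$, and $\theight{\re'}=\theight{\re}+1$, and then discharge them respectively via \cref{lemma:bounds}, \cref{lemma:ext-leq-inc}, and \cref{lemma:ext-zero-inc} combined with the extended \cref{theo:inc-bound}. Your closing remarks about where the real work lies and how the result lifts to words also match the paper's subsequent development.
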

\begin{proof}
 By \cref{cor:bound} (extended to $\reSetExt$) $\theight{\re'}\leq\theight{\re}+ 1$.
 Three cases are distinguished.
 \begin{itemize}
  \item $\theight{\re'}=\theight{\re}+1$: by \cref{lemma:ext-zero-inc}
        $\incFun{\re'}=0$, therefore by \cref{theo:inc-bound} (extended to $\reSetExt$) $\theight{\re'}\leq\theight{\re}+\incFun{\re}=\theight{\re}+\incFun{\re}-\incFun{\re'}$.
  \item $\theight{\re'}=\theight{\re}$: by \cref{lemma:ext-leq-inc} $\incFun{\re}-\incFun{\re'}\geq 0$, hence
        $\theight{\re'}=\theight{\re}\leq\theight{\re}+\incFun{\re}-\incFun{\re'}$.
  \item $\theight{\re'}<\theight{\re}$: by \cref{lemma:bounds} (extended to $\reSetExt$) $-1\leq\incFun{\re}-\incFun{\re'}$, hence
        $\theight{\re'}\leq\theight{\re}-1\leq\theight{\re}+\incFun{\re}-\incFun{\re'}$.
 \end{itemize}
\end{proof}

From \cref{theo:inv} and \cref{lemma:bounds} (extended to $\reSetExt$) we can
extend for free \cref{theo:gen-inc-bound} and \cref{cor:gen-inc-bound}  to $\reSetExt$.

\begin{theorem}
 For all $\re,\re'\in\reSetExt$, and $\word\in\symAlph^*$, if $\re\parDer{\word}\re'$, then $\theight{\re'}\leq\theight{\re}+\incFun{\re}-\incFun{\re'}$.
\end{theorem}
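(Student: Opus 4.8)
The plan is to mimic verbatim the proof of \cref{theo:gen-inc-bound}, which — as the remark following that proof points out — does not depend on the specific definition of $\incSym$, but only on the validity of the one-step invariant $\re\parDer{\sym}\re'\implies\theight{\re'}\leq\theight{\re}+\incFun{\re}-\incFun{\re'}$. That invariant is exactly the content of \cref{theo:inv}, which has already been established for $\reSetExt$ (covering also the rules \rn{l-shf} and \rn{r-shf} via the extended $\incSym$ and the two auxiliary lemmas). So the multi-step statement should follow by a straightforward induction on the length of $\word$.

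First I would dispatch the base case $\word=\emptyWord$: by \cref{def:parDer} we then have $\re'=\re$, hence $\theight{\re'}=\theight{\re}$ and $\incFun{\re}=\incFun{\re'}$, so $\theight{\re'}=\theight{\re}\leq\theight{\re}+\incFun{\re}-\incFun{\re'}$ since the last two increment terms cancel.

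For the inductive step I would write $\word=\sym\strcons\word'$ with $\sym\in\symAlph$, $\word'\in\symAlph^*$. By \cref{def:parDer} there is $\re''\in\reSetExt$ with $\re\parDer{\sym}\re''$ and $\re''\parDer{\word'}\re'$. Applying \cref{theo:inv} (extended to $\reSetExt$) to the first, single-symbol step gives $\theight{\re''}\leq\theight{\re}+\incFun{\re}-\incFun{\re''}$; applying the inductive hypothesis to $\re''\parDer{\word'}\re'$ gives $\theight{\re'}\leq\theight{\re''}+\incFun{\re''}-\incFun{\re'}$. Chaining these two inequalities, the intermediate term $\incFun{\re''}$ telescopes away and we obtain $\theight{\re'}\leq\theight{\re}+\incFun{\re}-\incFun{\re'}$, as required.

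As for the main obstacle: there is essentially none left at this stage. All the difficulty peculiar to the shuffle operator has already been absorbed into \cref{theo:inv} and, through it, into \cref{lemma:ext-zero-inc} and \cref{lemma:ext-leq-inc}, whose case splits on \rn{l-shf}/\rn{r-shf} and on whether the height strictly increases, stays constant, or decreases are where the real reasoning lives. The only point in the present proof that deserves a moment of attention is the bookkeeping — checking that the word-derivation decomposition supplied by \cref{def:parDer} lines up with the single-symbol hypothesis of \cref{theo:inv}, and that the $\incFun{\re''}$ term cancels exactly rather than being merely bounded, so that the invariant (not just the weaker $O(1)$ bound) propagates. Indeed the argument is literally the one used for \cref{theo:gen-inc-bound}, so a fully rigorous proof could just read ``See \cref{theo:gen-inc-bound}, using \cref{theo:inv} in place of \cref{cor:inv}''.
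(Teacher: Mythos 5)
Your proof is correct and matches the paper's approach exactly: the paper obtains this theorem "for free" by rerunning the induction on the length of $\word$ from \cref{theo:gen-inc-bound}, with the one-step invariant now supplied by \cref{theo:inv} instead of \cref{cor:inv}, which is precisely your telescoping argument.
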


\begin{corollary}
 For all $\re,\re'\in\reSetExt$, and $\word\in\symAlph^*$, if $\re\parDer{\word}\re'$, then $\theight{\re'}\leq\theight{\re}+1$.
\end{corollary}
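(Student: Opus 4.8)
The plan is to obtain this corollary exactly as \cref{cor:gen-inc-bound} was obtained from \cref{theo:gen-inc-bound} and \cref{lemma:bounds} in the shuffle-free case, now invoking their $\reSetExt$-extensions. Given $\re\parDer{\word}\re'$, the preceding theorem (the extension of \cref{theo:gen-inc-bound} to $\reSetExt$) gives $\theight{\re'}\leq\theight{\re}+\incFun{\re}-\incFun{\re'}$; by the $\reSetExt$-version of \cref{lemma:bounds} we have $\incFun{\re}\leq 1$ and $\incFun{\re'}\geq 0$, so $\incFun{\re}-\incFun{\re'}\leq 1$, and therefore $\theight{\re'}\leq\theight{\re}+1$. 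At this level there is essentially nothing more to do.

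For completeness I would first record the single fact that lets this chain go through with shuffle, namely that \cref{lemma:bounds} still holds for $\reSetExt$. Its proof is again by induction on the definition of $\incSym$, the only new case being $\incFun{\re_0\shuffleop\re_1}=\max(\geqFun{\re_0}{\re_1}\cdot\incFun{\re_0},\geqFun{\re_1}{\re_0}\cdot\incFun{\re_1})$: each $\geqFun{\cdot}{\cdot}$ is $0$ or $1$ and, by the inductive hypothesis, each $\incFun{\re_i}\in[0,1]$, so both arguments of the $\max$ lie in $[0,1]$, hence so does the $\max$. The preceding theorem itself is proved by the same induction on the length of $\word$ as \cref{theo:gen-inc-bound}: the base case $\word=\emptyWord$ is immediate, and for $\word=\sym\strcons\word'$ one decomposes $\re\parDer{\sym}\re''\parDer{\word'}\re'$, applies \cref{theo:inv} (the $\reSetExt$-invariant that replaces the now-false \cref{cor:inv}) to the first step and the inductive hypothesis to the remainder, and telescopes the $\incFun{\re''}$ terms. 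This argument is literally the one in \cref{sec:height}, since it uses only the single-step invariant and never the definition of $\incSym$.

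Because the corollary is a one-line consequence, there is no real obstacle here; all the difficulty has already been absorbed into \cref{theo:inv}, whose proof is the genuinely new work, splitting on whether $\theight{\re'}-\theight{\re}$ equals $1$, equals $0$, or is negative, and appealing to \cref{lemma:ext-zero-inc} and \cref{lemma:ext-leq-inc} in the first two cases respectively (the third being forced by \cref{lemma:bounds}). The only point I would be careful to verify is that \cref{lemma:ext-zero-inc} and \cref{lemma:ext-leq-inc} genuinely cover the new rules \rn{l-shf} and \rn{r-shf} with the shuffle clause of $\incSym$ — which is precisely why those lemmas were isolated and proved before \cref{theo:inv} rather than merged into it.
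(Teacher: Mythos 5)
Your proof is correct and follows exactly the paper's route: the corollary is obtained from the $\reSetExt$-extension of \cref{theo:gen-inc-bound} together with \cref{lemma:bounds} extended to $\reSetExt$, mirroring the derivation of \cref{cor:gen-inc-bound}, with all the real work already delegated to \cref{theo:inv} via \cref{lemma:ext-zero-inc} and \cref{lemma:ext-leq-inc}. Your supporting remarks on why \cref{lemma:bounds} and the word-induction survive the addition of shuffle match the paper's reasoning as well.
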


\subsection{Size of partial derivatives with the shuffle operator}

Similar as done in the previous section, we extend the result of \cref{theo:gen-inc-bound-size} and \cref{cor:gen-inc-bound-size} to the case of the shuffle operator in the case of the size $\tsize{\re}$ of a regular expression $\re$.

Also for the size, the introduction of shuffle does not affect the space complexity of partial derivatives. However, in this case the proofs can be extended in an easier way, although the definition of $\incFunSize{\re_0\shuffleop\re_1}$ requires some care.

Indeed, one would be tempted to consider the following (incorrect) definition:
\[
 \incFunSize{\re_0\shuffleop\re_1}=\max(\incFunSize{\re_0},\incFunSize{\re_1})
 \qquad\qquad\qquad\qquad\mbox{(incorrect definition)}
\]
Unfortunately, this definition verifies only the weaker inequality
$\tsize{\re}\leq\tsize{\re_0\shuffleop\re_1}+\incFunSize{\re_0\shuffleop\re_1}$, if $\re_0\shuffleop\re_1\parDer{\sym}\re$, but not $\tsize{\re}\leq\tsize{\re_0\shuffleop\re_1}+\incFunSize{\re_0\shuffleop\re_1}-\incFunSize{\re}$.

To show this, let us consider the following reduction step computing the partial derivative of $\starop{\sym}\shuffleop\starop{\asym}$ \wrt~ $\sym$:
\begin{flushleft}
 $
  \begin{array}{l}
   \starop{\sym}\shuffleop\starop{\asym}\parDer{\sym}\eps\catop\starop{\sym}\shuffleop\starop{\asym} \\
   \tsize{\starop{\sym}\shuffleop\starop{\asym}}=5 \qquad
   \incFunSize{\starop{\sym}\shuffleop\starop{\asym}}=2 \qquad
   \tsize{\eps\catop\starop{\sym}\shuffleop\starop{\asym}}=7 \qquad
   \incFunSize{\eps\catop\starop{\sym}\shuffleop\starop{\asym}}=2
  \end{array}
 $
\end{flushleft}
We have $\tsize{\eps\catop\starop{\sym}\shuffleop\starop{\asym}}=7\leq 5+2=\tsize{\starop{\sym}\shuffleop\starop{\asym}}+\incFunSize{\starop{\sym}\shuffleop\starop{\asym}}$, but
$\tsize{\eps\catop\starop{\sym}\shuffleop\starop{\asym}}=7\not\leq 5+2-2=\tsize{\starop{\sym}\shuffleop\starop{\asym}}+\incFunSize{\starop{\sym}\shuffleop\starop{\asym}}-\incFunSize{\eps\catop\starop{\sym}\shuffleop\starop{\asym}}$.

To guarantee the invariant $\tsize{\re'}\leq\tsize{\re}+\incFunSize{\re}-\incFunSize{\re'}$, the following definition has to be considered:
\[
 \incFunSize{\re_0\shuffleop\re_1}=\incFunSize{\re_0}+\incFunSize{\re_1} \qquad\qquad\qquad\qquad\mbox{(correct definition)}
\]
This new definition takes into account increments due to multiple reduction steps. In this way, we obtain $\tsize{\eps\catop\starop{\sym}\shuffleop\starop{\asym}}=7\leq 5+4-2=\tsize{\starop{\sym}\shuffleop\starop{\asym}}+\incFunSize{\starop{\sym}\shuffleop\starop{\asym}}-\incFunSize{\eps\catop\starop{\sym}\shuffleop\starop{\asym}}$.

The correctness of the definition of $\incFunSize{\re_0\shuffleop\re_1}$ is ensured by the claim of \cref{theo:inc-bound-size} extended to $\reSetExt$.

\subsubsection*{Proof of \cref{theo:inc-bound-size} extended to $\reSetExt$
}
For all $\re,\re'\in\reSetExt$, $\sym\in\symAlph$, if $\re\parDer{\sym}\re'$, then $\tsize{\re'}\leq \theight{\re}+\incFunSize{\re}-\incFunSize{\re'}$.

\begin{proof}
 The structure of the proof is the same, but the two new rules for the shuffle operator need to be considered.
 \begin{itemize}
  \item %%$\Rule{l-shf}{\re_0\parDer{\sym}\re'_0}{\re_0\shuffleop\re_1\parDer{\sym}\re'_0\shuffleop\re_1}{}$\\[2ex]
        rule~\rn{l-shf}: By inductive hypothesis $\tsize{\re_0'}\leq \tsize{\re_0}+\incFunSize{\re_0}-\incFunSize{\re_0'}$.

        We have $\tsize{\re'_0\shuffleop\re_1}=\tsize{\re_0'}+\tsize{\re_1}+1\leq\tsize{\re_0}+\incFunSize{\re_0}-\incFunSize{\re_0'}+\tsize{\re_1}+1=\tsize{\re_0\shuffleop\re_1}+\incFunSize{\re_0\shuffleop\re_1}-\incFunSize{\re_1}-\incFunSize{\re_0'\shuffleop\re_1}+\incFunSize{\re_1}=\tsize{\re_0\shuffleop\re_1}+\incFunSize{\re_0\shuffleop\re_1}-\incFunSize{\re_0'\shuffleop\re_1}$, by the inductive hypothesis, the definition of $\tsize{\ }$, and $\incSymSize$.
  \item rule (r-shf) is symmetric to rule (l-shf).
 \end{itemize}
\end{proof}

From \cref{theo:inc-bound-size} and \cref{lemma:bounds-size} (extended to $\reSetExt$) we can
extend for free \cref{theo:gen-inc-bound-size} and \cref{cor:gen-inc-bound-size}  to $\reSetExt$.

\begin{theorem}
 For all $\re,\re'\in\reSetExt$, and $\word\in\symAlph^*$, if $\re\parDer{\word}\re'$, then $\tsize{\re'}\leq\tsize{\re}+\incFunSize{\re}-\incFunSize{\re'}$.
\end{theorem}

\begin{corollary}
 For all $\re,\re'\in\reSetExt$, and $\word\in\symAlph^*$, if $\re\parDer{\word}\re'$, then $\tsize{\re'}\leq\tsize{\re}+\tsize{\re}^2$. Hence $\tsize{\re'}$ is $O(\tsize{\re}^2)$.
\end{corollary}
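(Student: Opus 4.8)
The plan is to read the corollary directly off the word-level invariant $\tsize{\re'}\leq\tsize{\re}+\incFunSize{\re}-\incFunSize{\re'}$ (the unlabelled theorem stated just above it, now over $\reSetExt$) together with the bounds $0\leq\incFunSize{\re}\leq\tsize{\re}^2$ of \cref{lemma:bounds-size} (also over $\reSetExt$). Granted these, for $\re\parDer{\word}\re'$ one obtains $\tsize{\re'}\leq\tsize{\re}+\incFunSize{\re}-\incFunSize{\re'}\leq\tsize{\re}+\incFunSize{\re}\leq\tsize{\re}+\tsize{\re}^2$, where the middle inequality uses $\incFunSize{\re'}\geq 0$ and the last uses $\incFunSize{\re}\leq\tsize{\re}^2$; the asymptotic statement follows since $\tsize{\re}+\tsize{\re}^2\leq 2\tsize{\re}^2$. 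So the real content is to confirm that the two ingredients survive the extension with shuffle.

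For \cref{lemma:bounds-size} over $\reSetExt$ I would redo the induction on the structure of $\re$; every old case is unchanged, and the only new case $\re=\re_0\shuffleop\re_1$ uses $\incFunSize{\re_0\shuffleop\re_1}=\incFunSize{\re_0}+\incFunSize{\re_1}$: non-negativity is immediate from the induction hypothesis, and for the upper bound the elementary inequality $a^2+b^2\leq(a+b)^2$ for $a,b\geq 0$ gives $\incFunSize{\re_0}+\incFunSize{\re_1}\leq\tsize{\re_0}^2+\tsize{\re_1}^2\leq(\tsize{\re_0}+\tsize{\re_1})^2\leq(\tsize{\re_0}+\tsize{\re_1}+1)^2=\tsize{\re_0\shuffleop\re_1}^2$. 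For the single-symbol invariant of \cref{theo:inc-bound-size} over $\reSetExt$ I would reuse its proof verbatim on the rules of \cref{fig:parDer} and add the cases \rn{l-shf} and \rn{r-shf}; for \rn{l-shf} the additive definition makes the computation telescope, namely $\tsize{\re_0'\shuffleop\re_1}=\tsize{\re_0'}+\tsize{\re_1}+1\leq\tsize{\re_0}+\incFunSize{\re_0}-\incFunSize{\re_0'}+\tsize{\re_1}+1=\tsize{\re_0\shuffleop\re_1}+\incFunSize{\re_0\shuffleop\re_1}-\incFunSize{\re_0'\shuffleop\re_1}$, because $\incFunSize{\re_0\shuffleop\re_1}-\incFunSize{\re_0'\shuffleop\re_1}=\incFunSize{\re_0}-\incFunSize{\re_0'}$; the case \rn{r-shf} is symmetric. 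The lift to arbitrary words is then exactly the induction on the length of $\word$ used for \cref{theo:gen-inc-bound}: the base case $\word=\emptyWord$ is trivial, and for $\word=\sym\strcons\word'$ one composes the single-symbol invariant on $\re\parDer{\sym}\re''$ with the induction hypothesis on $\re''\parDer{\word'}\re'$, the two $\incSymSize$-terms on $\re''$ cancelling; this step never inspects the shape of $\incSymSize$, so shuffle plays no role in it.

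I do not expect a genuine obstacle. The only point that requires a moment's thought — already flagged in the discussion preceding the corollary — is choosing $\incFunSize{\re_0\shuffleop\re_1}=\incFunSize{\re_0}+\incFunSize{\re_1}$ rather than the superficially natural $\max(\incFunSize{\re_0},\incFunSize{\re_1})$, which fails the invariant on an example such as $\starop{\sym}\shuffleop\starop{\asym}$; with the additive clause in hand, both the bounds lemma and the single-symbol invariant go through mechanically, and the corollary itself is the two-line calculation above. Unlike the height development, there is here no analogue of the delicate three-way case split of \cref{theo:inv}, since \cref{theo:inc-bound-size} was already established without relying on any zero-increment lemma.
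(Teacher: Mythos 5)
Your proposal is correct and follows essentially the same route as the paper: the corollary is read off the word-level invariant $\tsize{\re'}\leq\tsize{\re}+\incFunSize{\re}-\incFunSize{\re'}$ together with $0\leq\incFunSize{\re}\leq\tsize{\re}^2$, and your handling of the shuffle extension (the additive clause for $\incSymSize$, the telescoping computation in the \rn{l-shf}/\rn{r-shf} cases, and the $a^2+b^2\leq(a+b)^2$ step in the bounds lemma) coincides with what the paper does or leaves implicit when it says the extension comes ``for free''. No gaps.
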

\section{Conclusion}\label{sec:conclu}
We have explored the space complexity of partial derivatives of regular expressions in the context of RV with regular expressions extended with the shuffle operator. While the size of the set of syntactically distinct partial derivatives of a regular expression $\re$ is known to be linear in the size
of $\re$, no bounds on the size of the largest derivative had been previously investigated. We fill this gap by analyzing the height and size of partial derivatives and establishing upper bounds for both metrics.

We have shown that the height of any partial derivative of a regular expression increases by at most one, and that this property still holds when the shuffle operator is considered. Furthermore, we proved that the size of the largest partial derivative is bounded quadratically in the size of the original expression. This quadratic bound also holds in the presence of shuffle, despite with this operator there exist regular expressions whose equivalent NFAs exhibit an exponential explosion of the number of states.

Our approach is based on a general proof methodology that defines functions to compute upper bounds on the increase in height and size of the partial derivatives. These functions allows us to establish an invariant that can be generalized to multiple rewriting steps, enabling a modular and reusable proof structure. This methodology allowed us to seamlessly extend our results to a more complex operator like shuffle.

Our results support the practical use of partial derivatives in rewriting-based RV, ensuring that the memory usage remains manageable even in the worst case. Moreover, they show that the shuffle operator is useful for writing short and readable specifications without incurring high computational costs.

Short-term future work includes the extension of our analysis to
other improvements of the expressive power of regular expressions, for instance with the use of additional operators or parameterized specifications.

\bibliography{main}

\begin{thebibliography}{13}
\providecommand{\natexlab}[1]{#1}
\providecommand{\url}[1]{\texttt{#1}}
\providecommand{\urlprefix}{URL }
\expandafter\ifx\csname urlstyle\endcsname\relax
  \providecommand{\doi}[1]{doi:\discretionary{}{}{}#1}\else
  \providecommand{\doi}[1]{doi:\discretionary{}{}{}\begingroup
  \urlstyle{rm}\url{#1}\endgroup}\fi
\providecommand{\bibinfo}[2]{#2}

\bibitem[{Brzozowski(1964)}]{Brzozowski64}
\bibinfo{author}{J.~A. Brzozowski}, \bibinfo{title}{Derivatives of Regular
  Expressions}, \bibinfo{journal}{J. ACM}
  \bibinfo{volume}{11}~(\bibinfo{number}{4}) (\bibinfo{year}{1964})
  \bibinfo{pages}{481–494}.

\bibitem[{Owens et~al.(2009)Owens, Reppy, and Turon}]{OwensEtAl09}
\bibinfo{author}{S.~Owens}, \bibinfo{author}{J.~Reppy},
  \bibinfo{author}{A.~Turon}, \bibinfo{title}{Regular-expression derivatives
  re-examined}, \bibinfo{journal}{Journal of Functional Programming}
  \bibinfo{volume}{19}~(\bibinfo{number}{2}) (\bibinfo{year}{2009})
  \bibinfo{pages}{173–190}.

\bibitem[{Antimirov(1996)}]{Antimirov96}
\bibinfo{author}{V.~Antimirov}, \bibinfo{title}{Partial derivatives of regular
  expressions and finite automaton constructions},
  \bibinfo{journal}{Theoretical Computer Science}
  \bibinfo{volume}{155}~(\bibinfo{number}{2}) (\bibinfo{year}{1996})
  \bibinfo{pages}{291--319}, ISSN \bibinfo{issn}{0304-3975}.

\bibitem[{Leucker and Schallhart(2009)}]{LeuckerSchallhart09}
\bibinfo{author}{M.~Leucker}, \bibinfo{author}{C.~Schallhart},
  \bibinfo{title}{A brief account of runtime verification},
  \bibinfo{journal}{J. Log. Algebr. Methods Program.}
  \bibinfo{volume}{78}~(\bibinfo{number}{5}) (\bibinfo{year}{2009})
  \bibinfo{pages}{293--303}.

\bibitem[{Havelund and Ro{\c{s}}u(2004)}]{HavelundRosu2004}
\bibinfo{author}{K.~Havelund}, \bibinfo{author}{G.~Ro{\c{s}}u},
  \bibinfo{title}{{An Overview of the Runtime Verification Tool {Java}
  {PathExplorer}}}, \bibinfo{journal}{Form. Methods Syst. Des.}
  \bibinfo{volume}{24}~(\bibinfo{number}{2}) (\bibinfo{year}{2004})
  \bibinfo{pages}{189--215}.

\bibitem[{Sharma et~al.(2009)Sharma, Gopalakrishnan, Mercer, and
  Holt}]{SharmaEtAl2009}
\bibinfo{author}{S.~Sharma}, \bibinfo{author}{G.~Gopalakrishnan},
  \bibinfo{author}{E.~Mercer}, \bibinfo{author}{J.~Holt},
  \bibinfo{title}{{{MCC}: {A} runtime verification tool for {MCAPI} user
  applications}}, in: \bibinfo{booktitle}{Proceedings of 9th International
  Conference on Formal Methods in Computer-Aided Design (FMCAD'09)},
  \bibinfo{publisher}{IEEE}, \bibinfo{pages}{41--44}, \bibinfo{year}{2009}.

\bibitem[{Ancona et~al.(2017)Ancona, Franceschini, Delzanno, Leotta, Ribaudo,
  and Ricca}]{TowardsIoT17}
\bibinfo{author}{D.~Ancona}, \bibinfo{author}{L.~Franceschini},
  \bibinfo{author}{G.~Delzanno}, \bibinfo{author}{M.~Leotta},
  \bibinfo{author}{M.~Ribaudo}, \bibinfo{author}{F.~Ricca},
  \bibinfo{title}{{Towards Runtime Monitoring of {Node.js} and Its Application
  to the {I}nternet of {T}hings}}, in: \bibinfo{editor}{D.~Pianini},
  \bibinfo{editor}{G.~Salvaneschi} (Eds.), \bibinfo{booktitle}{Proceedings of
  the 1st Workshop on Architectures, Languages and Paradigms for IoT
  (ALP4IoT'17)}, vol. \bibinfo{volume}{264} of \emph{\bibinfo{series}{EPTCS}},
  \bibinfo{pages}{27--42}, \bibinfo{year}{2017}.

\bibitem[{Schiavio et~al.(2019)Schiavio, Sun, Bonetta, Ros{\`{a}}, and
  Binder}]{SchiavioEtAl19}
\bibinfo{author}{F.~Schiavio}, \bibinfo{author}{H.~Sun},
  \bibinfo{author}{D.~Bonetta}, \bibinfo{author}{A.~Ros{\`{a}}},
  \bibinfo{author}{W.~Binder}, \bibinfo{title}{{{NodeMOP}: runtime verification
  for {Node.js} applications}}, in: \bibinfo{booktitle}{Proceedings of the 34th
  ACM/SIG\-APP Symposium on Applied Computing (SAC'19)},
  \bibinfo{publisher}{ACM}, \bibinfo{pages}{1794--1801}, \bibinfo{year}{2019}.

\bibitem[{Besnard et~al.(2023)Besnard, Huet, Bivolarov, Saadi, and
  Cornard}]{BesnardEtAl23}
\bibinfo{author}{V.~Besnard}, \bibinfo{author}{M.~Huet},
  \bibinfo{author}{S.~Bivolarov}, \bibinfo{author}{N.~Saadi},
  \bibinfo{author}{G.~Cornard}, \bibinfo{title}{{{AMT}: {A} Runtime
  Verification Tool of Video Streams}}, in: \bibinfo{editor}{P.~Katsaros},
  \bibinfo{editor}{L.~Nenzi} (Eds.), \bibinfo{booktitle}{Proceedings of the
  23rd International Conference on Runtime Verification (RV'23)}, vol.
  \bibinfo{volume}{14245} of \emph{\bibinfo{series}{LNCS}},
  \bibinfo{publisher}{Springer}, \bibinfo{pages}{315--326},
  \bibinfo{year}{2023}.

\bibitem[{Ancona et~al.(2021)Ancona, Franceschini, Ferrando, and
  Mascardi}]{RML2021}
\bibinfo{author}{D.~Ancona}, \bibinfo{author}{L.~Franceschini},
  \bibinfo{author}{A.~Ferrando}, \bibinfo{author}{V.~Mascardi},
  \bibinfo{title}{{{RML}: {T}heory and practice of a domain specific language
  for runtime verification}}, \bibinfo{journal}{Sci. Comput. Program.}
  \bibinfo{volume}{205} (\bibinfo{year}{2021}) \bibinfo{pages}{102610}.

\bibitem[{Rosu and Havelund(2005)}]{RosuEtAl2005}
\bibinfo{author}{G.~Rosu}, \bibinfo{author}{K.~Havelund},
  \bibinfo{title}{Rewriting-Based Techniques for Runtime Verification},
  \bibinfo{journal}{Automated Software Engineering}
  \bibinfo{volume}{12}~(\bibinfo{number}{2}) (\bibinfo{year}{2005})
  \bibinfo{pages}{151--197}.

\bibitem[{Broda et~al.(2018)Broda, Machiavelo, Moreira, and Reis}]{BrodaEtAl18}
\bibinfo{author}{S.~Broda}, \bibinfo{author}{A.~Machiavelo},
  \bibinfo{author}{N.~Moreira}, \bibinfo{author}{R.~Reis},
  \bibinfo{title}{Automata for regular expressions with shuffle},
  \bibinfo{journal}{Information and Computation} \bibinfo{volume}{259}
  (\bibinfo{year}{2018}) \bibinfo{pages}{162--173}, ISSN
  \bibinfo{issn}{0890-5401}.

\bibitem[{Mayer and Stockmeyer(1994)}]{MayerElAl94}
\bibinfo{author}{A.~Mayer}, \bibinfo{author}{L.~Stockmeyer},
  \bibinfo{title}{The Complexity of Word Problems - This Time with
  Interleaving}, \bibinfo{journal}{Information and Computation}
  \bibinfo{volume}{115}~(\bibinfo{number}{2}) (\bibinfo{year}{1994})
  \bibinfo{pages}{293--311}, ISSN \bibinfo{issn}{0890-5401}.

\end{thebibliography}
\end{document}